\date{\today}
\newcommand{\bbR}{{\mathbb{R}}}
\newcommand{\bbC}{{\mathbb{C}}}
\newcommand{\bbT}{{\mathbb{T}}}
\newcommand{\cN}{{\mathcal{N}}}
\newcommand{\cS}{{\mathcal{S}}}
\newcommand{\cE}{{\mathcal{E}}}
\renewcommand{\Re}{\text{\rm Re}}
\newcommand{\Mth}{\langle \phi \rangle_{I}}
\newcommand{\Ms}{\langle f \rangle_{I}}
\allowdisplaybreaks \numberwithin{equation}{section}
\newtheorem{theorem}{Theorem}[section]
\newtheorem{lemma}[theorem]{Lemma}
\newtheorem{proposition}[theorem]{Proposition}
\newtheorem{corollary}[theorem]{Corollary}
\theoremstyle{definition}
\newtheorem{remark}[theorem]{Remark}
\newtheorem{problem}[theorem]{Problem}
\begin{document}

\title[]{Remarks on Nehari's problem, matrix $A_2$ condition, and weighted bounded mean oscillation}
\author[]{ A. Volberg
and P. Yuditskii}

\address{Institute for Analysis, Johannes Kepler University Linz,
A-4040 Linz, Austria}
\email{Petro.Yudytskiy@jku.at}
\thanks{
Partially supported by NSF grant DMS-0501067 and  the Austrian Founds
FWF, project number: P20413--N18}

\address{
Department of Mathematics, Michigan State University, East Lansing,
MI 48824, USA} \email{volberg@math.msu.edu}

\date{\today}

\begin{abstract}
We consider Nehari's problem in the case of non-uniqueness of solution.
The solution set is then parametrized by the unit ball of $H^{\infty}$
by means of so-called {\em regular generators} --- bounded holomorphic functions $\phi$.
The definition of {\em regularity} is given below, but let us mention now that
1) the following assumption on modulus of $\phi$ is sufficient for {\em regularity}:
$\frac{1}{1-|\phi|^2}\in L^1(\mathbb{T})$; 2) there is no necessary and sufficient condition
of {\em regularity} on bounded holomorphic $\phi$ in terms of $|\phi|$ on $\mathbb{T}$, \cite{Kh1}.
This makes reasonable the attempt to find a weaker sufficient condition on $|\phi|$ than the condition in 1).
This is done here.
Also we are discussing  certain new necessary and sufficient conditions of {\em regularity} in terms
of bounded mean (weighted)
oscillations of $\phi$. They involve the matrix $A_2$ condition from \cite{TV}.
\end{abstract}

\maketitle

\section{Introduction}
Recent developments in the inverse scattering/spectral theory
 \cite{VYu,TTh, CK, MP, R, Sv1, Sv2} stimulated our interest
to an old question on the description of   the Nehari problem solutions
 set  (precisely the question is formulated
 in Problem \ref{pr1} below). The Nehari
problem is strongly related to the Nonlinear Fourier Analysis
 \cite{TTh}, or what is basically the same,
to the inverse scattering problem for CMV matrices \cite{KhPYu}.

Here we consider $L^p$ spaces of functions on the unit circle $\bbT$
 and their Hardy subspaces $H^p$.
Recall that the famous Nehari Theorem describes projections of
 functions of the unite ball
of $L^\infty$ onto the Hardy space $H^2_-$, see e.g. textbooks
 \cite{Nik, Garnett}.
Let $P_-$ be the Riesz projector $P_-:L^2\to H^2_-$.
The function $F_-\in H^2_-$ possesses the representation
\begin{equation}\label{repr}
    F_-=P_-f,\quad \|f\|_\infty\le 1
\end{equation}
if and only if the corresponding Hankel operator
\begin{equation}\label{haop}
    \Gamma x:=P_- (F_- x), \quad x\in H^2,
\end{equation}
has norm less or equal to one, $\|\Gamma\|\le 1$ (the operator is
 naturally defined,
say, on polynomials and then extended by continuity) .

Let
\begin{equation}\label{nehclass}
    \cN(F_-)=\{f\in L^\infty: F_-=P_-f,\ \|f\|_\infty\le 1\}.
\end{equation}
The Nehari {\it problem} deals with a description of $\cN(F_-)$
 for the given $F_-$. Thus the Nehari Theorem is the solvability
 condition for this problem.
 The problem was solved by Adamyan, Arov, and Krein \cite{AAK1, AAK2,
 AAK3}.
In the case of non uniqueness the set of solutions is parameterized
by the unite ball of the class $H^\infty$.  Precisely, there exists
 $\phi=\phi_{F_-}\in H^\infty$
with the following three properties
\begin{equation}\label{propphi}
    \|\phi\|_\infty\le 1, \quad \int_{\bbT}\log(1-|\phi|)\, dm>-\infty,
 \quad \phi(0)=0.
\end{equation}
This function is accompanied by  the outer function $\psi$
\begin{equation}\label{psi}
    \psi(\zeta)=e^{\frac 1 2\int_{\bbT}
 \frac{t+\zeta}{t-\zeta}\log(1-|\phi(t)|^2)\, dm(t)},
\end{equation}
and the function
\begin{equation}\label{f0}
    f_0=-\frac{\bar\phi\psi}{\bar\psi}.
\end{equation}
The set $\cN(F_-)$ is of the form
\begin{equation}\label{dcrp-nehclass}
    \cN(F_-)=\{f=f_{\cE}=f_0+\frac{\psi^2\cE}{1-\phi\cE}: \cE\in
 H^\infty,\ \|\cE\|_\infty\le 1\}.
\end{equation}

\vspace{.25in}

This is the hard
\begin{problem} \label{pr1} Specify analytic properties
of those holomorphic $\phi$'s of \eqref{propphi} that generate the
 description \eqref{dcrp-nehclass}.
Following to Arov \cite{Ar} we call such $\phi$'s regular.
 \end{problem}

\begin{remark}
It is convenient to associate with a function $\phi$ of the form
 \eqref{propphi}
the unitary valued matrix function
\begin{equation}\label{mf}
    \cS=\cS_\phi=\begin{bmatrix}
    \phi&\psi\\ \psi& f_0
    \end{bmatrix}
\end{equation}
with the entries given by \eqref{psi}, \eqref{f0}. Then the relation \eqref{dcrp-nehclass} between $f$ and $\cE$ can be rewritten into
the vector form
$$
\begin{bmatrix}
A\\f
\end{bmatrix}=\cS\begin{bmatrix}
A\cE\\1
\end{bmatrix}=\begin{bmatrix}
    \phi&\psi\\ \psi& f_0
    \end{bmatrix}\begin{bmatrix}
A\cE\\1
\end{bmatrix},
$$
where $A$ is defined by this relation in a unique way,
$A=\frac{\psi}{1-\phi\cE}$. The fact that $\cS$ is unitary implies that
$$
|f|^2+|A|^2=1+|A|^2|\cE|^2,
$$
i.e.:
$$
1-|f|^2=|A|^2(1-|\cE|^2)\ge 0.
$$

\end{remark}

\bigskip

Let us give an example of non-regular $\phi$ from \eqref{propphi}.
Choose any inner function $\Delta, \Delta(0) >0$.
The point is, that
 a {\it  holomorphic} matrix function
\begin{equation}\label{mf1}
    \begin{bmatrix}
    \frac{\Delta-\Delta(0)}{1+\Delta(0)}&
 \sqrt{\Delta(0)}\frac{1+\Delta}{1+\Delta(0)}
     \\ \sqrt{\Delta(0)}\frac{1+\Delta}{1+\Delta(0)}&
  \frac{\Delta-\Delta(0)}{1+\Delta(0)}
    \end{bmatrix},
\end{equation}
is unitary on $\mathbb{T}$. That is, for such
 $\phi= \frac{\Delta-\Delta(0)}{1+\Delta(0)}$
the corresponding $f_0=\phi$ belongs to $H^\infty$, and thus
the class
\begin{equation}\label{dcrp-nehclassce}
    \{f=f_0+\frac{\psi^2\cE}{1-\phi\cE}: \cE\in H^\infty,\
 \|\cE\|_\infty\le 1\}
\end{equation}
describes a {\em proper} subclass of the set $\cN(0)$=unit ball of $H^{\infty}$ (since $P_-f_0=0$ in
 this case).

 Let us show that, for instance $f=0$ can not be represented in this way. First let us note that
 $\frac{\psi}{1-\phi\cE}\in H^2$. In fact, since $\frac{1+\phi\cE}{1-\phi\cE}$ is a function
 in the unite disc with the positive real part we have (in the sense of the boundary values on the unite circle)
 \begin{equation}
 \label{mnogoraz}
     \left|\frac{\psi}{1-\phi\cE} \right|^2\le\frac{1-|\phi\cE|^2}{|1-\phi\cE|^2} =\Re \frac{1+\phi\cE}{1-\phi\cE}\in L^1,
 \end{equation}
and in addition  $\frac{\psi}{1-\phi\cE}$ is a function of the Smirnov class (the denominator is an outer function). So
if $0$ is in the set, we get
$$
\frac{f_0}\psi=-\frac{\psi\cE}{1-\phi\cE}\in H^2.
$$
 On the other hand this function belongs to $H^2_-$ due to the representation
$$
\frac{f_0}\psi=-\frac{\bar\phi}{\bar \psi}
$$
($\phi/\psi$ is also of the Smirnov class). Thus $\phi=0$ and $\psi=1$.

 \bigskip

Note that actually this is a general obstacle: according to the Arov's
 Theorem
one can always "factor out" in a certain sense a holomorphic
 $\cS$-matrix from the given one, so that the remaining part,
indeed, generate the description of a Nehari problem in the form
 \eqref{dcrp-nehclass}
(the, so called, singular-regular factorization \cite{Ar}).

On a ceratin stage the answers
to Problem \ref{pr1} and to  a comparably long list of similar problems
 (see for instance
\cite{BS1, BS2} where the
similar question with respect to the Hamburger moment problem is
 discussed) were formulated in terms
of density of a certain class of holomorphic  function in an associated
 with the data Hilbert space.

 We need to recall the Nagy--Foias functional model space \cite{NF}. It can be associated with an arbitrary
 function $\phi$ of the unite ball of $H^\infty$ (the Schur class)
 $$
 K_\phi:=H^2\oplus\overline{\Delta L^2}\ominus \{\phi\oplus\Delta\} H^2,
 $$
  where $\Delta:=\sqrt{1-|\phi|^2}$, and
  $$
  \overline{\Delta L^2}=\text{clos}_{L^2}\{f=\Delta g: g\in L^2\}.
  $$

In our specific case $\log(1-|\phi|^2)\in L^1$ we can chose an "analytic" square root instead of "arithmetic",
i.e., to use $\psi$ instead of $\Delta$, and of course $\overline{\Delta L^2}=L^2$. So, the functional space is of the form
$$
K_\phi:=\begin{bmatrix}H^2\\L^2\end{bmatrix}\ominus\begin{bmatrix}
    \phi \\ \psi
    \end{bmatrix} H^2=\begin{bmatrix}0\\H^2_-\end{bmatrix}\oplus \hat H,
$$
where
\begin{equation}\label{hhat}
    \hat H:=\hat H_\phi= H^2(\bbC^2)\ominus\begin{bmatrix}
    \phi \\ \psi
    \end{bmatrix} H^2.
\end{equation}
That is, we have $\begin{bmatrix}
    x_+ \\ g
    \end{bmatrix}\in K_\phi$ if and only if
    $
    x_+\in H^2$, $g\in L^2$
     and
     $$
     x_-:=\bar\phi x_++\bar\psi g\in H^2_-.
     $$

Alternatively, we can characterize $K_\phi$ as pairs $ \begin{bmatrix}
    x_+ \\ x_-
    \end{bmatrix}$ such that
$$
    x_\pm \in H^2_\pm \quad\text{and}\quad  g:=\frac{x_--\bar\phi x_+}{\bar\psi} \in L^2.
     $$
     It looks  natural to hope that the pairs
     \begin{equation}\label{dancey}
        \begin{bmatrix}
    x_+ \\ x_-
    \end{bmatrix} =\begin{bmatrix}
    \psi y_+ \\ \bar \psi y_-
    \end{bmatrix},\quad y_\pm\in H^2_\pm
     \end{equation}
     form a dense set in $K_\phi$ (recall $\psi$ is an outer function). The corresponding $g$,
     $$
     g=y_-- \frac{\bar\phi\psi}{\bar\psi} y_+=y_-+f_0y_+,
     $$
      for sure belongs to $L^2$ and the element of $K_\phi$ is of the form
       \begin{equation}\label{elde}
       \begin{bmatrix}\psi y_+\\ g\end{bmatrix} = \begin{bmatrix}
    0\\ y_-+P_-f_0 y_+
    \end{bmatrix} \oplus\begin{bmatrix}
    \psi y_+\\ P_+ f_0 y_+
    \end{bmatrix} \in \begin{bmatrix}0\\H^2_-\end{bmatrix}\oplus \hat H.
       \end{equation}

    However, in fact,
\begin{theorem}
\label{1.2}
A function $\phi$ is regular if and
 only if the vectors of the form \eqref{elde} form a dense set in $K_\phi$, or what is the same,
 \begin{equation}\label{dcond1}
\check H_\phi:={\rm clos}_{H^2(\bbC^2)}\left\{\begin{bmatrix}
    \psi y_+\\ P_+ f_0 y_+
    \end{bmatrix}:y_+\in H^2\right\} =\hat H_\phi.
 \end{equation}
Moreover,  \eqref{dcond1} holds as soon as
\begin{equation}\label{dcond2}
   P_+\bar t\begin{bmatrix}
    \phi(t) \\ \psi(t)
    \end{bmatrix} =\begin{bmatrix}
    \frac{\phi(t)} t \\ \frac{\psi(t)-\psi(0)}t
    \end{bmatrix}\in \check H_\phi.
\end{equation}
\end{theorem}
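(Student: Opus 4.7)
The statement contains two assertions: the equivalence ``$\phi$ regular $\Leftrightarrow$ \eqref{dcond1}'' and the sufficiency of the single-vector condition \eqref{dcond2}. My strategy is to reduce density to the vanishing of an orthogonal complement, identify that complement with a parameter space $(w,w')\in H^2_-\oplus H^2_-$, connect its triviality to regularity, and finally kill it off using \eqref{dcond2} together with an orbit argument.

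\emph{Step 1 (orthogonal complement).} I would first characterize $\hat H_\phi\ominus\check H_\phi$. For $\xi=\begin{bmatrix}u\\v\end{bmatrix}\in\hat H_\phi$, orthogonality against every $\begin{bmatrix}\psi y_+\\P_+ f_0 y_+\end{bmatrix}$ reduces, after moving $P_+$ through the inner product (legitimate since $v\in H^2$), to $u\bar\psi+v\bar f_0\in H^2_-$. Combined with the defining relation $u\bar\phi+v\bar\psi\in H^2_-$ of $\hat H_\phi$, the pointwise identities $|\phi|^2+|\psi|^2=1$ and $\bar\phi\psi+\bar\psi f_0=0$ let one invert to
\[
\xi=\cS\begin{bmatrix}w\\w'\end{bmatrix}=\begin{bmatrix}\phi w+\psi w'\\\psi w+f_0 w'\end{bmatrix},\qquad w,\,w'\in H^2_-,
\]
subject to the constraint $\xi\in H^2(\bbC^2)$. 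The first part of the theorem then amounts to: $\phi$ is regular iff the only such $\xi$ is $0$. Here a non-trivial $(w,w')$ is turned, via the AAK/Arov construction, into a Nehari solution in $\cN(P_-f_0)$ that is not of the form $f_\cE$: inverting \eqref{dcrp-nehclass} as $\cE=\bar\psi(f-f_0)/(\phi\bar\psi f+\psi)$ exhibits $(w,w')$ as the obstruction to $\cE$ being Schur, and conversely any extra Nehari solution yields such a pair. This is Arov's regularisation theorem read through the functional model.

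\emph{Step 2 (single-vector reduction).} For $\xi=\cS\begin{bmatrix}w\\w'\end{bmatrix}$ in the orthogonal complement, I would compute, using $\phi\bar\xi_1+\psi\bar\xi_2=\overline{\bar\phi\xi_1+\bar\psi\xi_2}=\bar w$,
\[
\left\langle P_+\bar t\begin{bmatrix}\phi\\\psi\end{bmatrix},\xi\right\rangle=\int \bar t\,(\phi\bar\xi_1+\psi\bar\xi_2)\,dm=\int \bar t\,\bar w\,dm=\overline{\hat w(-1)}.
\]
Thus \eqref{dcond2} forces $\hat w(-1)=0$ for every such $\xi$. Exactly the same calculation with the iterated backward shifts $P_+\bar t^k\begin{bmatrix}\phi\\\psi\end{bmatrix}=(T^*)^{k-1}P_+\bar t\begin{bmatrix}\phi\\\psi\end{bmatrix}$ (where $T^*=P_+\bar t$ is the backward shift on $H^2(\bbC^2)$) gives $\overline{\hat w(-k)}$. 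Hence, once the full orbit $\{(T^*)^{k-1}P_+\bar t\begin{bmatrix}\phi\\\psi\end{bmatrix}\}_{k\ge1}$ is shown to remain in $\check H_\phi$, one obtains $\hat w(-k)=0$ for every $k\ge 1$, i.e.\ $w=0$. With $w=0$ the first component becomes $\psi w'\in H^2$; since $\psi$ is outer (so $\psi(0)\neq 0$), inspecting the Fourier coefficient at the highest non-zero mode of $w'\in H^2_-$ forces $w'=0$. So $\xi=0$, proving \eqref{dcond1}, and hence regularity by Step~1.

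\emph{Main obstacle.} The delicate point is the orbit propagation at the end of Step~2. The identity
\[
T^*\!\begin{bmatrix}\psi y_+\\P_+(f_0 y_+)\end{bmatrix}=\begin{bmatrix}\psi\tilde y_+\\P_+(f_0\tilde y_+)\end{bmatrix}+y_+(0)\,P_+\bar t\begin{bmatrix}\psi\\f_0\end{bmatrix},\qquad \tilde y_+=\frac{y_+-y_+(0)}{t},
\]
shows that $T^*$ preserves $\check H_\phi$ only modulo a rank-one correction along $P_+\bar t\begin{bmatrix}\psi\\f_0\end{bmatrix}$. Absorbing that correction back into $\check H_\phi$ — i.e.\ deducing $P_+\bar t\begin{bmatrix}\psi\\f_0\end{bmatrix}\in\check H_\phi$ from \eqref{dcond2}, which should be possible via the column symmetry of $\cS$ exchanging the roles of $\begin{bmatrix}\phi\\\psi\end{bmatrix}$ and $\begin{bmatrix}\psi\\f_0\end{bmatrix}$ in the model construction — is the core technical point that closes the induction and hence the proof.
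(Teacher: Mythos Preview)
The paper does not prove this theorem at all: immediately after the statement it says ``For a proof see e.g.\ \cite{Kh2}.'' There is therefore no in-paper argument to compare your proposal against.

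On its own merits, your outline is largely sound up to the point you yourself flag. The identification of $\hat H_\phi\ominus\check H_\phi$ with vectors $\cS\begin{bmatrix}w\\w'\end{bmatrix}$, $w,w'\in H^2_-$, lying in $H^2(\bbC^2)$ is correct, and the pairing calculation
\[
\Big\langle P_+\bar t^{\,k}\begin{bmatrix}\phi\\\psi\end{bmatrix},\xi\Big\rangle=\overline{\hat w(-k)}
\]
is clean and useful. The genuine gap is exactly the ``orbit propagation'' you isolate: from \eqref{dcond2} you need either that $T^*$ preserves $\check H_\phi$, or equivalently that $P_+\bar t\begin{bmatrix}\psi\\f_0\end{bmatrix}\in\check H_\phi$. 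Your appeal to a ``column symmetry of $\cS$'' is not an argument as it stands---the two columns of $\cS$ play asymmetric roles in the very definitions of $\hat H_\phi$ and $\check H_\phi$ (the first column is the inner direction one mods out, the second enters through $f_0$), so there is no obvious involution exchanging them while fixing $\check H_\phi$. A more promising route is to note that $P_{\hat H_\phi}\begin{bmatrix}0\\1\end{bmatrix}=-\tfrac{1}{\psi(0)}\,S e$ with $S=P_{\hat H_\phi}T$ and $e$ the vector in \eqref{dcond2}, and then argue with the compressed shift; but turning this into a non-circular proof that $T^*$ preserves $\check H_\phi$ (equivalently, that the defect vector $P_{\hat H_\phi}\begin{bmatrix}0\\1\end{bmatrix}$ already lies in $\check H_\phi$) still requires work you have not supplied. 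Until that step is closed, Step~2 is incomplete. Step~1, for its part, leans on Arov's singular--regular factorization without detail; that is acceptable as a reference, but be aware that this is where the actual equivalence with regularity lives, and it is not a triviality.
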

 For a proof  see e.g. \cite{Kh2}.

 \medskip

A trivial consequence is the following
\begin{proposition}
\label{trivial}
 Let
\begin{equation}\label{cr1}
    \frac 1{1-|\phi|^2}\in L^1.
\end{equation}
Then $\phi$ is regular.
\end{proposition}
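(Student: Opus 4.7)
The plan is to apply Theorem~\ref{1.2} by verifying \eqref{dcond2} in a very concrete way: I will exhibit a single $y_+\in H^2$ for which
$$
\begin{bmatrix}\psi y_+\\ P_+f_0 y_+\end{bmatrix}=\begin{bmatrix}\phi(t)/t\\ (\psi(t)-\psi(0))/t\end{bmatrix}.
$$
If this identity holds, then the target vector in \eqref{dcond2} lies in the generating set of $\check H_\phi$ itself (no closure is needed), and Theorem~\ref{1.2} immediately yields regularity of $\phi$.

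The natural candidate, forced by matching the first coordinate, is $y_+:=\phi/(t\psi)$. Membership in $H^2$ is where hypothesis \eqref{cr1} enters. Since $\phi(0)=0$, the factor $\phi/t$ is in $H^\infty$. The function $\psi$ is outer with $|\psi|^2=1-|\phi|^2$, so $1/\psi$ is again outer and of Smirnov class, with boundary modulus $|1/\psi|^2=1/(1-|\phi|^2)\in L^1$ by \eqref{cr1}. Hence $1/\psi\in H^2$, and $y_+=(\phi/t)\cdot(1/\psi)\in H^2$ as the product of a bounded $H^\infty$ function and an $H^2$ function.

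It remains to check the second coordinate. Using $f_0=-\bar\phi\psi/\bar\psi$ together with $|\phi|^2=1-\psi\bar\psi$, I compute on $\bbT$
$$
f_0 y_+=-\frac{|\phi|^2}{t\,\bar\psi}=\frac{\psi(t)}{t}-\frac{1}{t\,\bar\psi(t)}.
$$
The first summand has Riesz projection $P_+(\psi/t)=(\psi(t)-\psi(0))/t$. For the second, $1/\bar\psi=\overline{1/\psi}$ is antianalytic (including its constant term $1/\psi(0)$), and multiplication by $\bar t=1/t$ pushes its Fourier spectrum into $\{n\le -1\}$, so $P_+(1/(t\bar\psi))=0$. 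Combining the two contributions gives exactly $P_+f_0 y_+=(\psi(t)-\psi(0))/t$, proving the identity.

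The argument is essentially dictated by the structure of Theorem~\ref{1.2}, so no genuine obstacle is expected. The only delicate point is the verification that $1/\psi$ belongs to $H^2$ and is of Smirnov class, which legitimizes the cancellation inside $P_+$; this is precisely what the hypothesis $1/(1-|\phi|^2)\in L^1$ buys.
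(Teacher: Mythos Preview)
Your proof is correct and follows exactly the same route as the paper: you take $y_+=\phi/(t\psi)$, use \eqref{cr1} to ensure $1/\psi\in H^2$, and compute $P_+f_0y_+$ via the identity $f_0y_+=\psi/t-1/(t\bar\psi)$ together with $1/(t\bar\psi)\in H^2_-$. The only difference is that you spell out in more detail why $1/\psi\in H^2$ and why the antianalytic term is killed by $P_+$, which the paper leaves implicit.
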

Indeed, we put $x(t)=\frac{\phi(t)}{t\psi(t)}\in H^2$ and we get
\begin{equation*}
 \begin{bmatrix}
    \psi x\\ P_+ f_0 x
    \end{bmatrix}= \begin{bmatrix}
    \frac{\phi(t)}{t}\\ P_+  \frac{1}{t}(\psi-1/\bar\psi)
    \end{bmatrix}=\begin{bmatrix}
    \frac{\phi(t)} t \\ \frac{\psi(t)-\psi(0)}t
    \end{bmatrix},
\end{equation*}
since $\frac 1 t 1/\bar\psi\in H^2_-$.

One of the main goal of this note is to discuss: is it possible to give
 a better then \eqref{cr1}
sufficient condition in terms of the absolute value of $\phi$?

We have to point out on a nice result that was obtain in \cite{Kh1},
 see also \cite{Kats, Kh2}.
It was shown that there is no necessary and sufficient
condition of regularity of $\phi$ in terms of the absolute value
 $|\phi|$.
\begin{theorem}
Let $\phi\in H^\infty$ satisfies \eqref{propphi}. Then there exists an
 inner function $\Phi$ such that
$\phi\Phi$ is regular.
\end{theorem}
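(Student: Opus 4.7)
The plan is to apply the Adamyan--Arov--Krein theorem to the Nehari data naturally arising from $\phi$ itself, and then extract the inner multiplier $\Phi$ through an Arov-type factorization of the scattering matrix.

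First, I would form $F_- := P_- f_0 \in H^2_-$. Since $\|f_0\|_\infty = \|\phi\|_\infty \le 1$, the function $f_0$ lies in $\cN(F_-)$, and AAK produces a regular $\tilde\phi \in H^\infty$ (with its outer $\tilde\psi$ and $\tilde f_0 = -\bar{\tilde\phi}\tilde\psi/\bar{\tilde\psi}$) parametrizing $\cN(F_-)$ via \eqref{dcrp-nehclass}. In particular there is a unique $\cE_0 \in H^\infty$ with $\|\cE_0\|_\infty \le 1$ such that $f_0 = \tilde f_0 + \tilde\psi^2\cE_0/(1-\tilde\phi\cE_0)$. The unitarity identity from the Remark then gives
\[
|\psi|^2 = 1 - |f_0|^2 = \frac{|\tilde\psi|^2(1-|\cE_0|^2)}{|1-\tilde\phi\cE_0|^2}\quad\text{a.e.\ on }\bbT.
\]

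The key algebraic step is to read off an inner function from this modulus relation. The Smirnov-class combination
\[
h := \frac{\psi(1-\tilde\phi\cE_0)}{\tilde\psi}
\]
satisfies $|h|^2 = 1-|\cE_0|^2 \le 1$ a.e., so $h$ lies in the unit ball of $H^\infty$. Factoring $h = \Phi \cdot h_{\mathrm{out}}$ into its inner and outer parts produces the candidate inner $\Phi$. One then checks $\phi\Phi(0)=0$ using outerness of $\psi,\tilde\psi,1-\tilde\phi\cE_0$, and verifies the density criterion \eqref{dcond2} for $\phi\Phi$: the approximating sequence $y_+^{(n)} \in H^2$ is built by truncating the formal quotient $\phi\Phi/(t\psi)$ with powers of $h_{\mathrm{out}}$ acting as regularizers to keep things in $H^2$.

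The principal obstacle is precisely this final verification. As the non-regular example \eqref{mf1} already suggests, one cannot simply identify $\phi\Phi$ with the regular generator $\tilde\phi$ produced by the AAK step (there $\tilde\phi = 0$ while $\phi\neq 0$, so no inner $\Phi$ could give an equality). Hence the direct AAK parametrization alone does not single out $\Phi$; one instead invokes Arov's singular--regular factorization $\cS_\phi = \cS_{\tilde\phi}\cdot \cS_{\mathrm{sing}}$ of the scattering matrix, where $\cS_{\mathrm{sing}}$ is a specific analytic unitary two-by-two matrix whose scalar content is exactly the required inner $\Phi$. Tracking this factorization alongside the inner--outer decomposition of $h$, and carrying out the $H^2$-approximation dictated by \eqref{dcond2}, constitutes the technical heart of the proof.
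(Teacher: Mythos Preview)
The paper does not give its own proof of this theorem: it is quoted from Kheifets \cite{Kh1} (with pointers to \cite{Kats, Kh2}) as a known result used to explain why no necessary and sufficient condition for regularity can be phrased solely in terms of $|\phi|$. There is therefore no in-paper argument to compare your proposal against.

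As for the proposal itself, it is a strategic outline rather than a proof, and by your own account it does not close. You correctly set up the AAK step applied to $F_-=P_-f_0$, obtain the regular generator $\tilde\phi$, and derive the modulus identity $|h|^2=1-|\cE_0|^2$ for $h=\psi(1-\tilde\phi\cE_0)/\tilde\psi$; the inner--outer factorization $h=\Phi\,h_{\mathrm{out}}$ is well defined. But the decisive claim --- that this particular $\Phi$ makes $\phi\Phi$ regular --- is never established. You explicitly concede that $\phi\Phi$ need not equal $\tilde\phi$ (the example \eqref{mf1} already blocks that), and the suggested verification of \eqref{dcond2} by ``truncating $\phi\Phi/(t\psi)$ with powers of $h_{\mathrm{out}}$'' is a heuristic, not an argument: nothing in your text controls the $P_+f_0$ component of those approximants, which is exactly where the difficulty lies. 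Your final paragraph effectively replaces the missing step by invoking Arov's singular--regular factorization $\cS_\phi=\cS_{\tilde\phi}\cdot\cS_{\mathrm{sing}}$ as a black box --- i.e., by citing the same circle of results the paper itself cites. To turn this into a genuine proof you would need either to reproduce Kheifets's argument from \cite{Kh1}, or to make the Arov factorization explicit and show concretely how the scalar inner content of $\cS_{\mathrm{sing}}$ yields a $\Phi$ for which the density criterion of Theorem~\ref{1.2} is satisfied.
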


\section{Condition on modulus $|\psi|$ which ensures regularity of $\phi$ but which is weaker than $1/\psi \in H^2$.}
\label{chimney}

We want to see some non-trivial conditions on $|\phi|$ that guarantee that $\phi$ is regular.
By non-trivial we understand any condition different from
\begin{equation}
\label{triv}
\frac1{|\psi|^2}=\frac1{1-|\phi|^2} \in L^1\,.
\end{equation}
We denote by $h$ the outer function with modulus
$$
|h|^2:= \frac1{1-|\phi|^2} \,.
$$
In other words
$$
h=\frac1{\psi}\,.
$$
Outer $h$  always exists by the assumption \eqref{propphi} on $\phi$. It is {\em not} in $H^2$ throughout this section because we are looking for
``non-triviality".

\begin{theorem}
\label{log4}
Suppose  $|h|^2:= \frac1{1-|\phi|^2} \notin L^1(\mathbb{T})$. Suppose also that
\begin{equation}
\label{Nn1}
\liminf_{N\rightarrow\infty}(\int_{|h|\le N} |h|^4 dm)(\int_{|h|> N}(\log |h|)^4 dm)= 0\,,
\end{equation}
Then $\phi$ is regular.
\end{theorem}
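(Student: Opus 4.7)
By Theorem~\ref{1.2} it suffices to show that the vector $\begin{bmatrix}\phi/t\\ (\psi-\psi(0))/t\end{bmatrix}$ lies in $\check H_\phi$. Following the strategy of Proposition~\ref{trivial} (whose natural choice $y=\phi h/t$ is not in $H^2$ when $h\notin L^2$), the plan is to approximate by $y_N(t):=(\phi(t)/t)\,w_N(t)$, where $w_N\in H^\infty$ is the outer function with boundary modulus $\log|w_N|=L\chi_{\{|h|\le N\}}$, $L:=\log|h|\ge 0$ (equivalently, $|w_N|=|h|$ on $\{|h|\le N\}$ and $|w_N|=1$ on $\{|h|>N\}$). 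Since $|w_N|\le N$, $y_N\in H^\infty\subset H^2$.

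Setting $z_N:=\psi w_N\in H^\infty$, one has $|z_N|\le 1$, $|z_N|\to 1$ a.e., and $z_N(0)\to 1$; the standard outer convergence argument (used also in \eqref{mnogoraz}) then yields $z_N\to 1$ in $L^2(\bbT)$, and hence $\psi y_N=(\phi/t)\,z_N\to\phi/t$ in $L^2$. This settles the first coordinate of the approximation.

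For the second coordinate the algebraic identity from Proposition~\ref{trivial} (based on $\psi h=1$ and $1/\bar\psi=\bar h$) reads
$f_0 y_N=(\psi z_N)/t-(\bar h z_N)/t=(\psi z_N)/t-\Theta w_N/t,$
with $\Theta:=\psi/\bar\psi$ unimodular. The first term converges to $\psi/t$ in $L^2$, so its $P_+$-part converges to $(\psi-\psi(0))/t$; the whole matter therefore reduces to the convergence
$$\bigl\|P_+(\Theta w_N/t)\bigr\|_2=\bigl\|P_{\ge 1}(\Theta w_N)\bigr\|_2\longrightarrow 0,$$
where $P_{\ge 1}$ denotes the projection onto $\mathrm{span}\{t^n:n\ge 1\}$ in $L^2(\bbT)$.

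This is the heart of the argument and the sole place where hypothesis \eqref{Nn1} enters. Writing $z_N=e^{-L^c-i\tilde L^c}$ with $L^c:=L\chi_{\{|h|>N\}}$ (tilde denoting the harmonic conjugate), the inequality $|e^w-1|\le|w|$ for $\Re w\le 0$ produces the pointwise bound $|z_N-1|\le L^c+|\tilde L^c|$. Formally $P_{\ge 1}(\bar h)=0$, and a truncation-and-limit justification (with $\bar h_M$, $|h_M|=\min(|h|,M)$, $M\to\infty$) gives the identity
$P_{\ge 1}(\Theta w_N)=P_{\ge 1}\bigl(\bar h(z_N-1)\bigr).$
Combining this pointwise bound with H\"older's inequality $L^4\cdot L^4\hookrightarrow L^2$ and the $L^4$-boundedness of the Hilbert transform delivers an estimate of the shape
$$\|P_{\ge 1}(\Theta w_N)\|_2^{\,2}\ \lesssim\ \Bigl(\int_{\{|h|\le N\}}|h|^4\,dm\Bigr)^{1/2}\Bigl(\int_{\{|h|>N\}}(\log|h|)^4\,dm\Bigr)^{1/2},$$
whose right-hand side tends to $0$ along the subsequence furnished by \eqref{Nn1}. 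The main obstacle is the rigorous justification of the formal cancellation $P_{\ge 1}(\bar h)=0$ in the absence of $\bar h\in L^2$, together with controlling the contribution from $\{|h|>N\}$ to $\bar h(z_N-1)$ (which is not itself in $L^2$); only its $P_{\ge 1}$-projection is, and bounding it uses a Hankel/Toeplitz-type duality exploiting the outer structure of $z_N-1$.
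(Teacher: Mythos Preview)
Your construction is essentially the paper's: in the paper's notation your $z_N$ is exactly their $g_n=e^{-(\phi_n+i\tilde\phi_n)}$ with $\phi_n=L^c=L\chi_{\{|h|>N\}}$, and the reduction to showing $\|P_{\ge 1}(\bar h\,z_N)\|_2\to 0$ is the same. The difference is precisely at the point you flag as ``the main obstacle,'' and there the proposal has a real gap that the paper closes with a one--line trick you are missing.

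You try to subtract $\bar h$, using the formal identity $P_{\ge 1}(\bar h)=0$, and then bound $|z_N-1|\le L^c+|\tilde L^c|$. The problem is not just the justification: the resulting pointwise bound $|\bar h(z_N-1)|\le |h|\,(L^c+|\tilde L^c|)$ is genuinely uncontrolled on $\{|h|>N\}$, and no truncation $\bar h_M\to\bar h$ will yield a uniform estimate there. Your proposed ``Hankel/Toeplitz--type duality'' is neither stated nor needed.

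The paper instead subtracts $\overline{z_N}$ rather than $1$. Since $h\,z_N=w_N\in H^\infty$, the function $\bar h\,\overline{z_N}=\overline{w_N}$ is antianalytic and \emph{bounded}, so $P_{\ge 1}(\bar h\,\overline{z_N})=0$ rigorously. Hence
\[
\|P_{\ge 1}(\bar h\,z_N)\|_2\ \le\ \|\bar h(z_N-\overline{z_N})\|_2
\ =\ 2\,\bigl\||h|\,e^{-L^c}\sin\tilde L^c\bigr\|_2.
\]
The point is that $z_N-\overline{z_N}=-2i\,e^{-L^c}\sin\tilde L^c$ carries the damping factor $e^{-L^c}$, so $|h|\,e^{-L^c}$ equals $|h|$ on $\{|h|\le N\}$ and $1$ on $\{|h|>N\}$; the integrand is therefore in $L^2$ with no appeal to anything formal. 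Splitting the square as
\[
\int_{\{|h|\le N\}}|h|^2\,|\tilde L^c|^2\,dm\ +\ \int_{\bbT}|\tilde L^c|^2\,dm
\]
and applying Cauchy--Schwarz together with the $L^4$--boundedness of the conjugate gives exactly the bound in \eqref{Nn1}. The second term is handled by the observation that $\int_{\{|h|\le N\}}|h|^4\,dm\to\infty$ (since $|h|\notin L^2$), so it is dominated by the same product.

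In short: replace your subtraction of $1$ by subtraction of $\overline{z_N}$; then the gap disappears and the rest of your argument goes through verbatim.
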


\begin{corollary}
\label{log4cor}
Suppose  $|h|^2:= \frac1{1-|\phi|^2} \notin L^1(\mathbb{T})$. Suppose also that
\begin{equation}
\label{Nn2}
\liminf_{N\rightarrow\infty}N^4\,\int_{|h|> N}(\log |h|)^4 dm)= 0\,,
\end{equation}
Then $\phi$ is regular.
\end{corollary}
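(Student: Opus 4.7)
The plan is to derive this corollary directly from Theorem \ref{log4} by a one-line majorization: the corollary's hypothesis simply upper-bounds the theorem's hypothesis using the trivial fact that $|h|^4\le N^4$ on the set $\{|h|\le N\}$. So the only real content of the argument is a pointwise estimate combined with the fact that $m$ is a probability measure on $\mathbb{T}$.

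Concretely, I would first observe that, since $m(\mathbb{T})=1$,
\begin{equation*}
\int_{|h|\le N} |h|^4 \, dm \;\le\; N^4\, m(\{|h|\le N\}) \;\le\; N^4 .
\end{equation*}
Combining this with an arbitrary positive factor $\int_{|h|>N}(\log|h|)^4\,dm$ yields
\begin{equation*}
\Bigl(\int_{|h|\le N}|h|^4\,dm\Bigr)\Bigl(\int_{|h|>N}(\log|h|)^4\,dm\Bigr)
\;\le\; N^4 \int_{|h|>N}(\log|h|)^4\,dm .
\end{equation*}
Taking $\liminf$ over $N\to\infty$, the right-hand side tends to $0$ along some sequence by hypothesis \eqref{Nn2}, so the left-hand side has $\liminf=0$ along the same sequence, which is precisely condition \eqref{Nn1}. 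Since the other standing assumption $|h|^2\notin L^1(\mathbb{T})$ is identical in both statements, Theorem \ref{log4} applies and gives regularity of $\phi$.

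There is essentially no obstacle here; the only point worth verifying is that the trivial bound $|h|^4\le N^4$ on $\{|h|\le N\}$ is the correct way to decouple the two integrals in \eqref{Nn1}, and that this decoupling does not waste the whole gain — indeed, \eqref{Nn2} is strictly stronger than \eqref{Nn1}, which is exactly what one expects for a corollary. No finer analysis of $\phi$, $\psi$, or of the density argument behind Theorem \ref{1.2} is needed at this stage, since the work has already been done in the proof of Theorem \ref{log4}.
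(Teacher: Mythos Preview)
Your proposal is correct and is precisely the intended derivation: the paper states the corollary immediately after Theorem~\ref{log4} without a separate proof, relying on exactly the trivial majorization $\int_{|h|\le N}|h|^4\,dm\le N^4$ that you spell out.
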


\vspace{.2in}

Let us explain a bit assumptions  \eqref{Nn1}, \eqref{Nn2}.  It is easy to to fulfill them if there exists a sequence $N_n\rightarrow\infty$ such that

\begin{equation}
\label{levelsets}
|\{\zeta\in \mathbb{T}: |h| > N_n\}|( N_n\log N_{n+1})^4 \le \frac1{n^2}\,.
\end{equation}

On the other hand, this is easily reconcilable with the following condition which guarantees non-triviality:

\begin{equation}
\label{nontriv}
|\{\zeta\in \mathbb{T}:  |h|=N_{n+1}\}|( N_{n+1})^2 \ge (n+1)^2\,.
\end{equation}

In fact, to have both \eqref{nontriv} and \eqref{levelsets} one can define $|h|$ to be step-function having values $N_n, n=1, 2, 3,...$
on sets having measures $\frac{n^2}{N_n^2}$ (this gives \eqref{nontriv}) , and choose
$N_n$ going to infinity extremely fast to have  firstly $ \sum_{k=n+1}^{\infty}\frac{k^2}{N_k^2}< \frac{2(n+1)^2}{N_{n+1}^2}$ and secondly
$$
\frac{(n+1)^2}{N_{n+1}^2} ( N_n\log N_{n+1})^4 \le \frac1{2n^2}\,.
$$
Then \eqref{levelsets} follows.

\vspace{.2in}

\noindent{\bf Remark.}
 Conditions \eqref{Nn1}, \eqref{Nn2} are of course the condition just on $|\phi|$, or, which is the same, on $|\psi|$.

  \vspace{.2in}

 Now we will prove Theorem \ref{log4}.  We are grateful to A. Aleksandrov whose idea is used in the proof.

 \bigskip

 \begin{proof}

We need to use \eqref{Nn1} to prove
the existence of $H^2$ functions $v_n$ such that
$$
\psi v_n  \rightarrow \frac{\phi(\zeta)}{\zeta}\,\,\,\text{in}\,\,\, H^2\,\,\,\text{and}
$$
$$
-P_+\bigg(\frac{\bar{\phi}\psi}{\bar{\psi}}v_n\bigg) \rightarrow \frac{\psi(\zeta)- \psi(0)}{\zeta}\,\,\,\text{in}\,\,\, H^2\,.
$$
Notice that if we would  have $\frac1{|\psi|^2} = |h|^2 \in L^1$ then we could have taken
$$
v_n = \frac{\phi(\zeta)}{\zeta \psi(\zeta)}
$$
which would have been functions in $H^2$ in this case. But we have exactly opposite case:  $\frac1{|\psi|^2} = |h|^2 \notin L^1$.

Notice that to satisfy the above relationships it is enough to build $g_n\in H^2$ such that

\begin{equation}
\label{gnh}
g_n h \in H^2\,\,\,\forall n\,,
\end{equation}
and such that
\begin{equation}
\label{gn}
g_n \rightarrow 1\,\,\,\text{in}\,\,\, H^2\,\,\,
\text{and}\,\,\,-P_+\bigg(\frac{|\phi|^2\bar{\zeta}}{\bar{\psi}}g_n\bigg) \rightarrow \frac{\psi(\zeta)- \psi(0)}{\zeta}\,\,\,\text{in}\,\,\, H^2\,.
\end{equation}

In fact, having $g_n$ like that we put $v_n= \frac{\phi(\zeta)}{\zeta}  h  g_n$. Then $v_n\in H^2$ by \eqref{gnh}.
And these $v_n$ satisfy two conditions mentioned above because of \eqref{gn}.

Now let us write
$$
-P_+\bigg(\frac{|\phi|^2\bar{\zeta}}{\bar{\psi}}g_n\bigg)= -P_+(\bar{\zeta}\frac{g_n}{\bar{\psi}}) + P_+(\bar{\zeta} \psi g_n) =: I_n +II_n\,.
$$
 Now
 $$
 II_n = \frac{\psi g_n - (\psi g_n)(0)}{\zeta} \rightarrow \frac{\psi-\psi (0)}{\zeta}
 $$
 in $H^2$ because $g_n\rightarrow 1$ in $H^2$, $\psi$ is from $H^{\infty}$ and backward shift operator is bounded in $H^2$.

 So the only thing we need now is to construct $g_n$ such that \eqref{gnh} holds, $g_n\rightarrow 1$ in $H^2$, and
 \begin{equation}
 \label{I}
 -I_n = P_+(\bar{\zeta}\bar{h}g_n) \rightarrow 0
 \end{equation}
 in $H^2$.

 To have all this it is enough to have
 \begin{equation}
 \label{I1}
 P_+(\bar{\zeta}\bar{h}g_n) \rightarrow 0\,\,\,\text{in}\,\,\,H^2\,;\,\,\, g_nh \in H^2\,;\,\,\, g_n\rightarrow 1\,\,\,\text{in}\,\,\,H^2\,.
 \end{equation}

Let us fix a sequence $N_n\rightarrow\infty$, put
$$
\phi_n =\begin{cases} \log |h|,\,\,\,|h|> N_n\\
0,\,\, |h|\le N_n\end{cases}
$$
Here are our
$$
g_n:= e^{-(\phi_n +i\widetilde{\phi_n)}}
$$

In fact, obviously,
\begin{equation}
\label{hphi}
\|P_+(\bar{\zeta}\bar{h}g_n)\|^2_2 \leq \|\bar{h} e^{-(\phi_n +i\widetilde{\phi_n)}} - \bar{h} e^{-(\phi_n -i\widetilde{\phi_n)}}\|^2_2\,.
\end{equation}

And, of course, $|g_n||h| \le N_n$.

Now we  have $\|\bar{h} e^{-(\phi_n +i\widetilde{\phi_n)}} - \bar{h} e^{-(\phi_n -i\widetilde{\phi_n)}}\|^2_2=4
 \|h e^{-\phi_n} \sin\widetilde{\phi_n}\|_2^2$. In conjunction
with \eqref{hphi} this gives (we also use that $|\sin x|\le |x|$ and the definition of $\phi_n$)

\begin{equation}
\label{hphi1}
\frac 1 4\|P_+(\bar{h}g_n)\|^2_2 \leq \|he^{-\phi_n} \widetilde{\phi_n}\|^2_2\le \int_{|h|\le N_n} |h|^2 |\widetilde{\phi_n}|^2 + \int_{\mathbb{T}} |\widetilde{\phi_n}|^2 =: J_1 +J_2.
\end{equation}

 To estimate $J_1$ we write (using the boundedness of the harmonic conjugation operator $\widetilde{\cdot}$ in $L^4(\mathbb{T})$)
 \begin{multline}
\label{hphi2}
J_1\le \int_{|h|\le N_n} |h|^2 |\widetilde{\phi_n}|^2 \le (\int_{|h|\le N_n} |h|^4 dm)^{\frac12} (\int_{\mathbb{T}} |\widetilde{\phi_n}|^4dm)^{\frac12} \le  \\ C\,(\int_{|h|\le N_n} |h|^4 dm)^{\frac12} (\int_{\mathbb{T}} |\phi_n|^4dm)^{\frac12}= \\ C\,(\int_{|h|\le N_n} |h|^4 dm)^{\frac12} (\int_{|h|>N_n} |\log |h||^4dm)^{\frac12}\,.
\end{multline}

In particular, if \eqref{Nn1} holds, there exists a sequence of numbers $N_n\rightarrow \infty$ such that the last expression tends to zero. Thus $J_1\rightarrow 0$.

Now let us estimate $J_2$.

 \begin{multline}
\label{hphi3}
J_2=\int_{\mathbb{T}} |\widetilde{\phi_n}|^2  \le (\int_{\mathbb{T}} |\widetilde{\phi_n}|^4\,dm)^{\frac12} \le C\,(\int_{\mathbb{T}} |\phi_n|^4\,dm)^{\frac12} =\\
C\,(\int_{|h|>N_n} |\log|h||^4\,dm)^{\frac12} \le C\,(\int_{|h|\le N_n} |h|^4 dm)^{\frac12} (\int_{|h|>N_n} |\log |h||^4dm)^{\frac12}\,.
\end{multline}
The last inequality holds  for large $N_n$. In fact, for large $N_n$ integral  $\int_{|h|\le N_n} |h|^4 dm$ is as large as we wish because we assumed that $\int_{\mathbb{T}} |h|^2 dm =\infty$.

Therefore, if \eqref{Nn1} holds then $J_2\rightarrow 0$. Going back to \eqref{hphi} we see that we proved Theorem \ref{log4}.

 \end{proof}

\section{Strong Regularity}

The strong regularity means that  $\phi$ is regular and in addition $\|\Gamma\|<1$, where
 $\Gamma=\Gamma_{f_0}$.  In other words, $\phi$ is strongly regular if and only if $\phi$ is regular and
the operator
$(I-\Gamma^*\Gamma)$ is invertible.

The following theorem is a combination the  Helson-Szeg\"o, Hunt-Muckenhoupt-Wheeden, and Adamyan-Arov-Krein Theorems (AAK), for a matrix generalization
see e.g. \cite{AD1, AD2}.

\begin{theorem}
\label{strongly}
Function $\phi$ is strongly regular if and only if
\begin{equation}
\label{sa2}
\frac{1-|\phi|^2}{|1-\phi |^2} = w \in A_2.
\end{equation}
\end{theorem}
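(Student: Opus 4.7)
The proof combines three classical equivalences: Adamyan--Arov--Krein (AAK) for Hankel operator norms, Hunt--Muckenhoupt--Wheeden (HMW) for the $A_2$ characterization, and Helson--Szeg\H{o} (HS) for its factorization form. The plan is to first reformulate strong regularity. By Nehari--AAK, $\|\Gamma_{f_0}\|<1$ is equivalent to $\mathrm{dist}_{L^\infty}(f_0,H^\infty)<1$, i.e.\ there exist $h\in H^\infty$ and $\rho<1$ with $\|f_0+h\|_\infty\le\rho$. Substituting $f_0=-\bar\phi\psi/\bar\psi$, this reads
\begin{equation*}
|h\bar\psi-\bar\phi\psi|^2\le\rho^2|\psi|^2\quad\text{a.e. on }\bbT.
\end{equation*}
The key algebraic identity is
\begin{equation*}
w=\frac{1-|\phi|^2}{|1-\phi|^2}=\Re\frac{1+\phi}{1-\phi},
\end{equation*}
which displays $w$ as the real part of a Herglotz function with $|\psi/(1-\phi)|^2=w$ on $\bbT$. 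Let $k_o$ be the outer function in $H^2$ with $|k_o|^2=w$.

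For the direction $w\in A_2\Rightarrow$ strong regularity, I would proceed in two steps. First, verify regularity via Theorem~\ref{1.2}: by HMW the Riesz projection $P_+$ is bounded on $L^2(w)$, and this will allow me to construct $H^2$ approximations $y_n\to\phi/(t\psi)$ in a suitable weighted sense (the limit itself need not lie in $H^2$ when $1/\psi\notin H^2$), so that $(\psi y_n,P_+f_0 y_n)$ converges in $H^2(\bbC^2)$ to the vector $(\phi/t,(\psi-\psi(0))/t)$ of \eqref{dcond2}. Second, to obtain $\|\Gamma_{f_0}\|<1$, I would invoke the HS representation $w=e^{u+\tilde v}$ with $u,v\in L^\infty_\bbR$ and $\|v\|_\infty<\pi/2$; from this decomposition one builds an outer $G\in H^2$ with $|G|^2=w$ whose argument is controlled, and then extracts explicitly an $h\in H^\infty$ achieving $\|f_0+h\|_\infty<1$ by an AAK-type optimization.

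For the converse direction, given $h\in H^\infty$ with $\|f_0+h\|_\infty\le\rho<1$, the goal is to transfer the distance estimate to the unimodular quotient $\bar k_o/k_o$. Multiplying $f_0+h$ by appropriate unimodular factors built from the outer part of $\psi(1-\bar\phi)$, the inequality $|f_0+h|\le\rho$ becomes $|\bar k_o/k_o - H|\le\rho$ for some $H\in H^\infty$ explicitly fabricated from $h$ and the inner-outer factorization of $1-\phi$. By the HS characterization, distance $<1$ from $H^\infty$ in $L^\infty$ of the unimodular quotient $\bar k_o/k_o$ is exactly the $A_2$ condition on $w=|k_o|^2$.

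The main obstacle is the converse implication, specifically producing the correct $H^\infty$-approximant $H$ for $\bar k_o/k_o$. The subtlety is that $\psi/(1-\phi)$ need not coincide with $k_o$: if $1-\phi$ carries a nontrivial singular inner factor $S$, then $\psi/(1-\phi)$ picks up a $1/S$-type singularity inside the disk while $k_o$ is outer. One must absorb $S$ into a modification of $h$ so that the resulting distance inequality is stated with respect to the genuinely outer $k_o$. Once this algebraic reconciliation is performed, the scalar theorem follows directly by chaining AAK, HS, and HMW; the matrix $A_2$ condition of \cite{TV} enters only when one treats the operator-valued analogue with $\cS$ replaced by a matrix Schur function.
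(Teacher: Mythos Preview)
Your overall strategy (Nehari/AAK $+$ Helson--Szeg\H{o} $+$ Hunt--Muckenhoupt--Wheeden) is the right one, but two concrete observations are missing, and without them the sketch has genuine gaps.

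\textbf{Regularity from $A_2$.} You propose to verify \eqref{dcond2} by a weighted approximation scheme; as written this is not a proof (you do not say what the $y_n$ are, nor why the pair converges in $H^2(\bbC^2)$). The paper bypasses this entirely by proving the much stronger fact $\tfrac{1}{1-|\phi|^2}\in L^1(\bbT)$: from $\tfrac{1+\phi}{1-\phi}=w+i\tilde w$ one gets $\tfrac{|1+\phi|^2}{1-|\phi|^2}=w+\tilde w^{\,2}/w$, and $\int_{\bbT}\tilde w^{\,2}/w\,dm<\infty$ because $w\in A_2$ makes the Hilbert transform bounded on $L^2(1/w)$. Adding $\tfrac{|1-\phi|^2}{1-|\phi|^2}=1/w\in L^1$ yields $\tfrac{1}{1-|\phi|^2}\in L^1$, and Proposition~\ref{trivial} gives regularity immediately.

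\textbf{The unimodular representative.} Your plan is to massage $\|f_0+h\|_\infty<1$ into a distance estimate for $\bar k_o/k_o$, and you flag a possible singular inner factor of $1-\phi$ as the main obstacle. The paper's first proof sidesteps all of this: take $\cE=1$ in \eqref{dcrp-nehclass} and compute (using $|\phi|^2+|\psi|^2=1$ on $\bbT$)
\[
f_1:=f_0+\frac{\psi^2}{1-\phi}=\frac{\bar g}{g},\qquad g:=\frac{1-\phi}{\psi}.
\]
Thus a unimodular quotient already sits in the coset $f_0+H^\infty$, so $\Gamma_{f_0}=\Gamma_{f_1}$ and $\|\Gamma\|<1$ is literally the Helson--Szeg\H{o} condition $|g|^2=1/w\in A_2$. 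For the forward implication one also needs $g\in H^2$; this comes from the AAK identity $(I-\Gamma^*\Gamma)^{-1}1=1/(\psi\psi(0))\in H^2$. The paper's second proof (Subsection~\ref{mensheedinicy}) does engage the inner-factor issue you raise, but resolves it differently: via an AAK lemma it locates an $A_2$ unimodular representative in the coset, shows it must equal $\bar g_e/g_e$ with $g_e=(1-e^{ic}\phi)/\psi$ for some real $c$, and then invokes rotation invariance (Theorem~\ref{curious}). Neither proof extracts the outer part of $1-\phi$ as you propose.
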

Note that the left hand side in \eqref{sa2} being a positive harmonic function in the unit disc ($=\Re\frac{1+\phi}{1-\phi}$)
is equal to the harmonic extension of an $A_2$ weight on the circle.

\begin{proof} Let $\phi$ is regular and $\|\Gamma_{f_0}\|<1$.
Consider the symbol $f_1$, which corresponds to
the choice $\cE=1$ in \eqref{dcrp-nehclass} (recall $f_0$ corresponds to $\cE=0$).
It is of the form
$$
f_1=\frac{\bar{g}}{g}, \quad g:= \frac{1-\phi}{\psi}
$$
and we have $\Gamma=\Gamma_{f_1}$. By the Adamyan-Arov-Krein Theorem (AAK), see Remark \ref{raak},
\begin{equation}\label{aakpsi}
     \frac{1}{\psi\psi(0)}=(I-\Gamma^*\Gamma)^{-1}1\in H^2.
\end{equation}
Thus $g\in H^2$ and $\|\Gamma_{\frac{\bar g}g}\|<1$. By the  Helson-Szeg\"o and Hunt-Muckenhoupt-Wheeden Theorems, see e.g. \cite{Nik},
$$
|g|^2 = \frac{|1-\phi|^2}{1-|\phi|^2}\in A_2.
$$

Conversely, from \eqref{sa2} we conclude
\begin{equation}
\label{a21}
\frac{1+\phi}{1-\phi} = w +i \widetilde{w}\,,
\end{equation}
as before  $\widetilde{w}$ stands for the harmonic conjugate of $w$ (the Hilbert transform on the circle).

Then of course
$$
\frac{1-\phi}{1+\phi} =\frac{1}{ w +i \widetilde{w}}\,,
$$
and so
\begin{equation}
\label{a22}
\frac{1-|\phi|^2}{|1+\phi |^2} = \Re \frac{1}{ w +i \widetilde{w}} = \frac{w}{w^2 + \widetilde{w}^2}\,.
\end{equation}

Let us derive that
\begin{equation}
\label{a23}
\frac{|1+\phi|^2}{1-|\phi |^2} \in L^1
\end{equation}
on the unit circle.
Indeed,
$$
\int_{\bbT}\frac{|1+\phi|^2}{1-|\phi |^2}  = \int_{\bbT}w + \int_{\bbT}\frac{\widetilde{w}^2}{w}\le \|w\|_1 + Q_{1/w}\|w\|_1<\infty\,,
$$
where $Q_{1/w}$ stands for the norm of the Hilbert transform from $L^2_{1/w}$ to itself, which is finite as $w\in A_2$.

Combine \eqref{a23} with a simple remark that \eqref{sa2} implies  $\frac{|1-\phi|^2}{1-|\phi |^2} \in L^1(\bbT)$. Add these two relations and obtain
\begin{equation}
\label{aF}
\frac{1}{1-|\phi|^2} \in L^1(\bbT)\,.
\end{equation}
We know that this is sufficient for being regular.

Finally, by the converse statement in the
 Helson-Szeg\"o and Hunt-Muckenhoupt-Wheeden Theorems we have $\|\Gamma\|=\|\Gamma_{f_1}\|=\|\Gamma_{\frac{\bar{g}}{g}}\|<1$, if $|g|^2 \in A_2$. The latter is exactly \eqref{sa2}.
\end{proof}

\begin{remark}\label{raak}
Let us comment \eqref{aakpsi} from the point of view of regularity. We still assume that $\|\Gamma\|<1$,
that is $(I-\Gamma^*\Gamma)^{-1}1$ has the direct meaning.
 Then it is easy to check that the vector
\begin{equation}\label{kcheck}
    \check k:=\begin{bmatrix}\psi (I-\Gamma^*\Gamma)^{-1}1\\P_+f_0(I-\Gamma^*\Gamma)^{-1}1
    \end{bmatrix}\psi(0)
\end{equation}
is the reproducing kernel in $\check H_\phi^2$, see \eqref{dcond1}. Indeed,
$$
\langle\begin{bmatrix}\psi y_+\\P_+ f_0 y_+
    \end{bmatrix},\check k\rangle=\langle(I-\Gamma^*\Gamma)y_+,(I-\Gamma^*\Gamma)^{-1}1\rangle\psi(0)=
    y_+(0)\psi(0).
    $$
Also, it is evident that the reproducing kernel of $\hat H_\phi^2$ is $\hat k=\begin{bmatrix}1\\ 0\end{bmatrix}$, i.e.:
$$
\langle\begin{bmatrix} x_1\\ x_2\end{bmatrix},\hat k\rangle= x_1(0),\quad \forall \begin{bmatrix} x_1\\ x_2\end{bmatrix}\in \hat H_\phi^2.
$$
Thus, $\hat H_\phi=\check H_\phi$ implies $\hat k=\check k$, and the equality of the first components is precisely
\eqref{aakpsi}.

Generally, in AAK theory, for a regular $\phi$ the following formula holds
\begin{equation*}
    \frac{1}{\psi(\zeta)\psi(0)}=\lim_{r\downarrow 1}((rI-\Gamma^*\Gamma)^{-1}1)(\zeta),
\end{equation*}
so the last function is not necessary in $H^2$.

\end{remark}

\subsection{Less than one.}
\label{mensheedinicy}

The previous proof exploited a lot of AAK theory in its part that proves \ref{sa2} from strong regularity, and we wish to give a more direct proof for the reader who is not so familiar with this subtle material.

\noindent{\em The second proof.} 

First we need an AAK  lemma, which can be found by the reader in \cite{Nik} or extracted from AAK papers from our references list (however we provide the proof for the sake of completeness).

\begin{lemma}
\label{menshe1}
Let $F\in L^{\infty}$ and $d (F, H^{\infty}) <1$. Then the coset $F+H^{\infty}$ contains a function $\frac{\bar{h}}{h}$, where $|h|^2 \in A_2$.
In particular, this coset contains the unimodular function $v$ such that the Toeplitz oerator $T_v$ is invertible.
\end{lemma}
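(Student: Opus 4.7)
The plan is to construct $v = \bar h/h$ as the distinguished Nehari symbol corresponding to $\cE \equiv 1$ in the Adamyan--Arov--Krein parametrization \eqref{dcrp-nehclass} applied to the Hankel data $F_- := P_- F$.

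First, after subtracting a suitable $H^\infty$ function from $F$ (which leaves the coset $F + H^\infty$ untouched) we may assume $\|F\|_\infty \le 1$, so that $F$ itself is a Nehari solution in $\cN(F_-)$ and $\|\Gamma_{F_-}\| = d(F,H^\infty) < 1$. The parametrization \eqref{dcrp-nehclass} then supplies a regular generator $\phi$ with outer companion $\psi$ ($|\psi|^2 = 1-|\phi|^2$) and basepoint $f_0 = -\bar\phi\psi/\bar\psi$. Setting $g := (1-\phi)/\psi$, the choice $\cE = 1$ produces
$$
f_1 := f_0 + \frac{\psi^2}{1-\phi} = \frac{\bar g}{g},
$$
exactly as recorded in the first proof of Theorem~\ref{strongly}. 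Thus $f_1$ is unimodular on $\bbT$, and $f_1 - F \in H^\infty$ because both functions lie in $\cN(F_-)$ and so their difference has vanishing projection onto $H^2_-$ while being bounded. This gives the desired coset element with $h := g$.

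The two nontrivial verifications are $g \in H^2$ with $|g|^2 \in A_2$, and the invertibility of $T_{f_1}$. For $g \in H^2$ I would invoke the AAK identity \eqref{aakpsi} of Remark~\ref{raak}, available precisely because $\|\Gamma\| < 1$, which yields $1/(\psi\psi(0)) = (I-\Gamma^*\Gamma)^{-1}1 \in H^2$; since $\phi \in H^\infty$, also $g = (1-\phi)/\psi \in H^2$. For $|g|^2 \in A_2$, observe that $\Gamma_{f_1} = \Gamma_{F_-}$, so $\|\Gamma_{\bar g/g}\| < 1$, and the classical Helson--Szegő / Hunt--Muckenhoupt--Wheeden correspondence (cf.~\cite{Nik}) then forces $|g|^2 \in A_2$. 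Finally, invertibility of $T_{f_1}$ is the content of Devinatz's theorem applied to the unimodular symbol $\bar g/g$ whose outer denominator has $A_2$ modulus-squared.

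The main obstacle is packaging the AAK reproducing-kernel identity $1/(\psi\psi(0)) = (I - \Gamma^*\Gamma)^{-1} 1 \in H^2$ as a clean, self-contained input in this setting, since it is the single piece of non-elementary AAK spectral theory the lemma must still import. Once that is in hand, the remainder is routine bookkeeping with the explicit Nehari parametrization and the Helson--Szegő dictionary.
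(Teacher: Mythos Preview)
Your argument is correct, but it is not the paper's route, and it runs somewhat counter to the purpose the lemma serves in the paper. Lemma~\ref{menshe1} is introduced precisely to give a second proof of Theorem~\ref{strongly} that avoids the subtle AAK machinery used in the first proof. The paper therefore proves the lemma by a Schmidt-pair argument: after possibly adding $c_0\bar z$ so that $d(\bar z F,H^\infty)=1>d(F,H^\infty)$, the Hankel operator $H_{\bar z F}$ has essential norm strictly below its norm and hence attains its norm on some $H\in H^2$; from the maximizing pair $(H,\bar G_0)$ and any norm-one symbol $u\in\bar z F+H^\infty$ one gets $uH=\bar G_0$, so inner--outer factorization yields $v=zu=\bar h/(Sh)\in F+H^\infty$. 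Invertibility of $T_v$ is then checked directly ($\|H_v\|<1$ gives left-invertibility; a short $H^1$ vs.\ $H^1_-$ argument shows $\ker T_{\bar v}=0$), and Helson--Szeg\H{o} converts this into $v=\bar h/h$ with $|h|^2\in A_2$.

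Your approach, by contrast, imports the full AAK description \eqref{dcrp-nehclass} to produce a regular $\phi$ for $F_-=P_-F$, then reads off $f_1=\bar g/g$ with $g=(1-\phi)/\psi$ at $\cE=1$, and finally uses the reproducing-kernel identity \eqref{aakpsi} to place $1/\psi$ (hence $g$) in $H^2$. This is exactly the package the paper's \emph{first} proof of Theorem~\ref{strongly} uses, so as a proof of the lemma it is fine but somewhat circular in context. Note also that you are importing more than the single identity \eqref{aakpsi}: the very existence of a regular $\phi$ parametrizing $\cN(F_-)$ is the AAK theorem itself, so your ``single piece of non-elementary AAK spectral theory'' is really two pieces. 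The paper's maximizing-vector proof needs neither.
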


 \begin{proof}
Let $d(f, H^{\infty})$ denote the distance between a function $f\in L^{\infty}$ and the subspace $H^{\infty}$ of bounded holomorphic functions in the unit disc.  Let $H^{\infty}_0$ denote bounded holomorphic functions in the unit disc that vanish at zero. Given such an $F$ consider $d (\bar{z}F, H^{\infty}) = d (F, H^{\infty}_0)$.
 Two cases may happen.  Suppose first that $d (\bar{z}F, H^{\infty}) =1$.  Then operator Hankel operator $H_{\bar{z}F}:= P_- (\bar{z}F\cdot): H^2\rightarrow H^2_-$ attains its norm.
 In fact, $\|H_{\bar{z}F}\|_{ess}\le d (\bar{z}F, \bar{z}H^{\infty}) = d (F, H^{\infty} )<1= d (\bar{z}F, \bar{z}H^{\infty}) = \|H_{\bar{z}F}\|$. This is just classical Nehari's theorem (see \cite{Nik}), and $\|\cdot\|_{ess}$ means the norm modulo compact operators (essential norm). If the essential norm of the operator $A$ in the Hilbert space is strictly less than its norm, then $A$ attains its norm. See \cite{Nik} Ch VII again, or just notice that
 we can reduce our statement to self-adjoint operators by considering $A^*A$ (and polar decomposition $A=U(A^*A)^{1/2}$).  But if the essential norm of the self adjoint operator is strictly smaler than its norm, it means that its norm is a maximal eigenvalue od finite multiplicity (spectral theorem), and, thus, the operator attains its norm. Now let our Hankel operator attain its norm $1$ at vector $H\in H^2, \|H\|_2=1$. Denote $\bar{G}_0 = H_{\bar{z}F}H$. Denote by $u$ a function in the coset $\bar{z}F, H^{\infty}$ of $\|u\|_{\infty}=1$. It always exists by obvious compactness argument.
 
 Then
 $$
 1= \|\bar{G}_0\|_2 = \|H_{\bar{z}F}H\|= \|H_u H\|_2 = \|P_ (u H)\| \le \|uH\|_2 \le \|H\|_2 =1\,.
 $$
 This means of course that $|u|=1$ almost everywhere on the circle and that $uH$ is antianalytic (that is $P_+ (uH)=0$).
 
 Therefore,
 
 $$
 uH = P_- (uH) = \bar{G}_0\,.
 $$
 We conclude that  two $H^2$-functions $H$ and $G_0$ have the same modulus a. e. on the unit circle. Write $H=S_1 h $, $G_0 = z S_2 h$ their inner-outer factorizations ($h$ is an outer function here).
 
 Then we obtain 
 $$
 \frac{\bar{h}}{S_0 h}=u \in \bar{z}F + H^{\infty}\,,
 $$
 where $S_0 =z S_1 S_2$, $h\in H^2$.
 Consider $S =z S_1 S_2$ and
 $$
 v:= z u =  \frac{\bar{h}}{S h} \in F + H^{\infty}\,.
 $$
  
  We want to prove now that Toeplitz operator $T_v$ is invertible.
 From the fact that $d (v,H^{\infty})= d (F, H^{\infty}<1$ and from Nehari's theorem we know that $\|H_v\|<1$. But   $|v|=1$ a. e. on the unit circle and then $T_v^*T_v = I - H_v^*H_v$. Therefore $\|H_v\|<1$ means that $T_v$ is bounded from below (that is it is left-invertible). To prove that it is invertible it is sufficient to prove that its adjoint has only trivial kernel. Let $R\in Ker T^*_v = T_{\bar{v}}$.
 Then
 $$
 T_{\bar{v}}R =0\Rightarrow P_+ ( \frac{h}{\bar{S}\bar{h}}R) =0 \Rightarrow   \frac{h}{\bar{S}\bar{h}}R = \bar{z}\bar{r}\,,
 $$
 where $r\in H^2$. Then $ hR = \bar{z} \bar{S}\bar{h}\bar{r}$, the left hand side being from $H^1$, and the right hand side being from $H^1_-$. The intersection being zero we conclude that $R=0$.
 
 So we get $v$, $v:= z u =  \frac{\bar{h}}{S h} \in F + H^{\infty}$ such that $T_v$ is invertible. By Helson-Szeg\"o theorem (see \cite{Nik}, Ch. VII) we conclude that $v=\frac{\bar{h}}{h},$ where $|h|^2 \in A_2$.
 
\vspace{.1in}

Now we need to consider the second case: $d (\bar{z}F, H^{\infty}) <1$.  We denote $f:= \bar{z}F$ and consider the function $\tau(c):= d (f + c\bar{z}, H^{\infty})$.  We know that $\tau(0) <1, \tau(\infty) =\infty$ and $\tau$ is obviously continuous. So we can find $c_0$ such  that for $ f+ c_0\bar{z}=:\bar{z}\Phi$
$$
d (\bar{z}\Phi, \bar{z}H^{\infty}) = d (f+ c_0\bar{z}, \bar{z}H^{\infty})= d (f, H^{\infty}) <1\,,
$$
$$
d (\bar{z}\Phi, H^{\infty}) = d (f+ c_0\bar{z}, H^{\infty})=1\,.
$$
 The we proceed exactly as in the first case by using the fact that the last two relationships imply that operator $H_{\bar{z}\Phi}$ attains its norm. We will find unimodular $u\in f+c_0\bar{z}+ H^{\infty}$ such that
 $$
 u=\frac{\bar{g}}{z\Theta g}\,,
 $$
 where $\Theta $ is inner and $g$ is outer from $H^2$. Therefore, $v:= z u$ will be in coset $zf + H^{\infty} = F+ H^{\infty}$ and will have the form $u= =\frac{\bar{g}}{\Theta g}$ wit the same $\Theta$ and $g$.
 
 Again as in the first case $\|H_v\|= d (F, H^{\infty}) <1$ (Nehari's teorem) ensures that $T_v$ is left-invertible. And exactly as before we prove that $T_v^*=T_{\bar{v}}$ has a trivial kernel.
 Hence $T_v$ is invertible and we conclude once again by Helson-Szeg\"o theorem (see \cite{Nik}, Ch. VII)  that $v=\frac{\bar{h}}{h},$ where $|h|^2 \in A_2$.
AAK lemma is proved.
\end{proof}

It is easy to finish the second proof of our theorem. Let $\phi$ be strongly regular. This means that it is reguar and so $f_0:= -\frac{\bar{\phi}\psi}{\psi}$ is such that
$$
f_0 + H^{\infty} = f_0 +\frac{\psi^2 e}{1-\phi e}\,,
$$ 
where $e$ runs over the unit ball of $H^{\infty}$.  But strong regularity means also that $\|H_{f_0}\|<1$. Lemma \ref{menshe1} means that there exists an outer $h$ such that $\frac{\bar{h}}{h} \in f_0 + H^{\infty}$ and $|h|^2 \in A_2$.
We gather:
$$
\frac{\bar{h}}{h} =-\frac{\bar{\phi}\psi}{\psi} +\frac{\psi^2 e}{1-\phi e}
$$ 
for some $e$ from  the unit ball of $H^{\infty}$.  Then a. e. on the circle
$$
\bigg| -\frac{\bar{\phi}\psi}{\psi} + \frac{\psi^2 e}{1-\phi e}\bigg| =1\,.
$$
But the left hand side is
$$
\bigg| \frac{e-\phi}{1-\phi e}\bigg|\,,
$$
and we conclude that$e$ is an inner function. Then
$$
\frac{\bar{h}}{h} =-\frac{\bar{\phi}\psi}{\psi} +\frac{\psi^2 e}{1-\phi e} = \frac{\psi \overline{1-\phi e}}{\bar{\psi}(1-\phi e)}\,.
$$
Denote $g_e:= \frac{1-\phi e}{\psi}$. Then we just got 
$$
\frac{\bar{h}}{h}= e\frac{\bar{g_e}}{g_e}\,.
$$
But $|1/g_e|^2= \frac{1-|\phi e|^2}{|1-\phi e|^2}\in L^1$, see \eqref{mnogoraz},  and so $1/g_e \in H^2$ ($g_e$ is obviously an outer function). We then immediately conclude from the last display equality that $g=c\cdot h$ with some constant $c$. In fact, we have 
$$
e\frac{h}{g_e} =\frac{\bar{h}}{\bar{g_e}}\,,
$$ where the left hand side belongs to $H^1$ and the right hand side belongs to $H^1_-$. So both expressions are just constants. And $e=e^{ir}$, $g_e=c\cdot h$, as it has been promised.
Therefore, $ \frac{|1-e^{ir}\phi|^2}{1-|\phi|^2} \in A_2$.

Now we use Theorem \ref{curious} proved in the Appendix to conclude that $ \frac{|1-\phi|^2}{1-|\phi|^2} \in A_2$. 

We finished the proof that regularity implies \ref{sa2}.

\bigskip
The following criteria was proposed in \cite{AD2}.

\begin{theorem}\label{thAD} $\phi$ is strongly regular if and only if
the matrix weight
\begin{equation}\label{weight}
    W:=\begin{bmatrix}1&\phi\\ \bar\phi&1\end{bmatrix}
\end{equation}
satisfies matrix $A_2$ condition.
\end{theorem}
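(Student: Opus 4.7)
The plan is to reduce Theorem~\ref{thAD} to Theorem~\ref{strongly}. By the latter, strong regularity of $\phi$ is equivalent to $w := \tfrac{1-|\phi|^2}{|1-\phi|^2} \in A_2$ (scalar), so it suffices to prove
\[
W \in \text{matrix }A_2 \iff w \in \text{scalar }A_2.
\]

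I would first recast matrix $A_2$, which reads $\langle W^{-1}\rangle_I \le C\,\langle W\rangle_I^{-1}$ as a positive-operator inequality for every arc $I \subset \bbT$, as a one-parameter family of scalar inequalities. Testing against $e = (1,t)^{T}$, $t\in\bbC$, a short computation using $W^{-1} = \tfrac{1}{1-|\phi|^2}\left(\begin{smallmatrix}1 & -\phi\\ -\bar\phi & 1\end{smallmatrix}\right)$ yields
\[
(\langle W^{-1}\rangle_I e,e) = \Big\langle \tfrac{|1-t\phi|^2}{1-|\phi|^2}\Big\rangle_I + |t|^2, \qquad (\langle W\rangle_I^{-1} e,e) = \tfrac{|1-t\langle\phi\rangle_I|^2}{1-|\langle\phi\rangle_I|^2} + |t|^2,
\]
so that matrix $A_2$ becomes the family of scalar inequalities $\langle\tfrac{|1-t\phi|^2}{1-|\phi|^2}\rangle_I + |t|^2 \le C\bigl(\tfrac{|1-t\langle\phi\rangle_I|^2}{1-|\langle\phi\rangle_I|^2} + |t|^2\bigr)$ uniformly in $t\in\bbC$ and $I$. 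Specializing to $t=1$ gives $\langle 1/w\rangle_I + 1 \le C\bigl(1/\omega(\langle\phi\rangle_I)+1\bigr)$, where $\omega(z) := \Re\tfrac{1+z}{1-z}$ is positive harmonic on $\bbD$.

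For $W \in A_2 \Rightarrow w \in A_2$, I would observe that $\omega \circ \phi$ is the Poisson extension of $w$ into $\bbD$. Since $\phi$ is holomorphic, for the Carleson-box tip $z_I$ over $I$ the averages $\langle\phi\rangle_I$ and $\phi(z_I)$ are hyperbolically comparable in $\bbD$, so $\omega(\langle\phi\rangle_I) \approx (\omega\circ\phi)(z_I) \approx \langle w\rangle_I$ by standard Poisson and Harnack estimates. Hence $\langle w\rangle_I\,\langle 1/w\rangle_I$ would be uniformly bounded, yielding $w \in A_2$. For the converse $w \in A_2 \Rightarrow W \in A_2$, I would verify the displayed family of scalar inequalities for every $t$: writing $|1-t\phi|^2 = |(1-t) + t(1-\phi)|^2$ and expanding splits the numerator into $|1-t|^2$, a cross term proportional to $1-\phi$, and $|t|^2|1-\phi|^2 = |t|^2(1-|\phi|^2)/w$; each averaged piece would then be bounded, using $w \in A_2$ and the comparison $\omega(\langle\phi\rangle_I) \approx \langle w\rangle_I$, by the corresponding pieces on the right-hand side.

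I expect the main obstacle to be the transition from the matrix operator inequality to the scalar $A_2$ condition. Unlike in the scalar setting, matrix $A_2$ cannot be captured by a single constant test vector, so all $t\in\bbC$ must be handled with uniform constants. The pivotal analytic input is the comparison $\omega(\langle\phi\rangle_I)\approx\langle w\rangle_I$, and this is where the holomorphy of $\phi$ enters decisively: it combines Poisson-kernel concentration at the Carleson-box tip with the hyperbolic invariance of $\omega$.
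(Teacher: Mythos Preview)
Your strategy---reducing Theorem~\ref{thAD} to Theorem~\ref{strongly} by proving directly that the matrix $A_2$ condition for $W$ is equivalent to scalar $A_2$ for $w=\tfrac{1-|\phi|^2}{|1-\phi|^2}$---is \emph{not} what the paper does. The paper's proof bypasses Theorem~\ref{strongly} entirely: it uses the weighted Riesz--projector formulation of matrix $A_2$, namely $\langle W^{-1}P_+X,P_+X\rangle\le Q\langle W^{-1}X,X\rangle$, and tests it on the explicit family $X=\begin{bmatrix}\psi x_+\\ \bar\psi x_-\end{bmatrix}+W\begin{bmatrix}y_-\\y_+\end{bmatrix}$ coming from the functional model $K_\phi$. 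A direct computation then collapses the matrix inequality to $\langle(I-\Gamma^*\Gamma)x_+,x_+\rangle\ge\epsilon\|x_+\|^2$, which is exactly $\|\Gamma\|<1$; regularity enters only through density of these test vectors. So the equivalence ``matrix $A_2 \Longleftrightarrow$ scalar $A_2$'' is for the paper a \emph{corollary} (Theorem~\ref{curious}) of combining Theorems~\ref{strongly} and~\ref{thAD}, not an input.

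Your proposed direct route has a genuine gap at its pivot, the comparison $\omega(\langle\phi\rangle_I)\approx\langle w\rangle_I$. You justify it by asserting that $\langle\phi\rangle_I$ and $\phi(z_I)$ are at uniformly bounded hyperbolic distance in $\bbD$ ``by standard Poisson and Harnack estimates''. This is not a standard fact, and in general it is false for bounded holomorphic $\phi$: already for $\phi(z)=z$ (or $rz$ with $r\uparrow 1$, to stay inside \eqref{propphi}) one computes that on a short arc $I$ the arc average sits at distance $\asymp |I|^2$ from the boundary while $\phi(z_I)$ sits at distance $\asymp |I|$, so the pseudohyperbolic distance between them does not tend to $0$ and can approach $1$. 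Thus Harnack gives no uniform comparison of $\omega$ at these two points. Whatever comparison does hold must use the $A_2$ hypothesis itself (e.g.\ through \eqref{sina2}), and you have not indicated how; specializing your inequality to $t=1$ alone will not close the loop. The converse direction, $w\in A_2\Rightarrow W\in A_2$, is likewise only sketched: after expanding $|1-t\phi|^2$, bounding the piece $|1-t|^2\langle(1-|\phi|^2)^{-1}\rangle_I$ uniformly in $t$ by the right-hand side again requires exactly the unproved comparison. In short, the reduction you propose is plausible but the analytic core is missing, whereas the paper's functional-model computation avoids this difficulty altogether.
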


Matrix $A_2$ condition was found in \cite{TV}. Let us notice that Theorem  \ref{strongly} has the following curious corollary.

\begin{theorem}
\label{curious} 
Let $w$ be such that \eqref{sa2} holds.
 Consider a new positive harmonic function given by
\begin{equation}
\label{A2c}
w_{e^{ic}}:=\frac{1-|\phi|^2}{|1-e^{ic}\phi|^2}.
\end{equation}
The positive harmonic function $w_{e^{ic}}$  does not have
singular part in its Herglotz representation. It absolutely continuous part on the circle (also called $w_{e^{ic}}$) is uniformly in $A_2$.
\end{theorem}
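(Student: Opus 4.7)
The plan is to apply Theorem~\ref{strongly} twice: once with $\phi$, and once with $\phi_c:=e^{ic}\phi$ in place of $\phi$. The hypothesis $w\in A_2$ and Theorem~\ref{strongly} give that $\phi$ is strongly regular. I will first verify that $\phi_c$ is also strongly regular with the same constants (uniformly in $c$), so that a second application of Theorem~\ref{strongly} gives
\[
w_{e^{ic}}=\frac{1-|\phi_c|^2}{|1-\phi_c|^2}\in A_2
\]
uniformly in $c$. This handles the $A_2$-claim; I will treat the ``no singular part'' claim separately, by proving that the holomorphic function $F_c:=(1+e^{ic}\phi)/(1-e^{ic}\phi)$ lies in $H^1$.

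Transfer of strong regularity from $\phi$ to $\phi_c$: since $|\phi_c|=|\phi|$, the outer function $\psi$ is unchanged, and $f_0^{(c)}=-\overline{\phi_c}\psi/\bar\psi=e^{-ic}f_0$, so $\Gamma_{f_0^{(c)}}=e^{-ic}\Gamma_{f_0}$ has the same norm, strictly less than one. Regularity itself transfers via the constant unitary $U=\operatorname{diag}(e^{-ic},1)$ on $H^2(\bbC^2)$: $U$ sends the generator $(\phi_c,\psi)^\top$ of $\hat H_{\phi_c}^{\perp}$ to $(\phi,\psi)^\top$, so $U\hat H_{\phi_c}=\hat H_\phi$; and $U$ sends a typical element $(\psi y_+,e^{-ic}P_+f_0y_+)$ of $\check H_{\phi_c}$ to $e^{-ic}(\psi y_+,P_+ f_0 y_+)$, so $U\check H_{\phi_c}=\check H_\phi$. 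By Theorem~\ref{1.2} this gives $\phi$ regular iff $\phi_c$ regular. Since the proof of Theorem~\ref{strongly} yields the $A_2$-norm of $w_{e^{ic}}$ as a function of $\|\Gamma_{f_0^{(c)}}\|=\|\Gamma_{f_0}\|$ only, uniformity in $c$ is automatic.

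For the absence of a singular part in the Herglotz representation, I would prove $F_c\in H^1$. Write
\[
\frac{1}{1-e^{ic}\phi}=\frac{\psi}{1-e^{ic}\phi}\cdot\frac{1}{\psi}.
\]
The first factor is in $H^2$ by the estimate~\eqref{mnogoraz} applied with $\cE=e^{ic}$ (a constant in the unit ball of $H^\infty$). The second factor is in $H^2$ by Remark~\ref{raak}: strong regularity of $\phi$ gives $1/(\psi\,\psi(0))=(I-\Gamma^*\Gamma)^{-1}1\in H^2$. The Cauchy--Schwarz inequality places the product in $H^1$, so $F_c=2/(1-e^{ic}\phi)-1\in H^1$. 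An $H^1$ function whose real part is nonnegative is the Poisson integral of its boundary values, so its Herglotz measure equals that boundary real part against Lebesgue measure, with no singular component.

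The main obstacle is the no-singular-part assertion: the transfer of strong regularity from $\phi$ to $\phi_c$ is essentially an algebraic observation about the Nagy--Foias data and is automatic, but producing $F_c\in H^1$ requires the AAK input from Remark~\ref{raak} to be combined with the elementary $H^2$ estimate~\eqref{mnogoraz}.
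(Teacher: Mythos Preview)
Your argument is correct and runs along the same lines as both proofs the paper offers. The paper's short proof right after the statement observes that the matrix weight $W_c$ associated with $\phi_c=e^{ic}\phi$ is obtained from $W$ by conjugation with a constant diagonal unitary, hence stays in matrix $A_2$; you carry out the equivalent transfer one level down, noting $f_0^{(c)}=e^{-ic}f_0$ and using the unitary $U=\operatorname{diag}(e^{-ic},1)$ to identify $\hat H_{\phi_c},\check H_{\phi_c}$ with $\hat H_\phi,\check H_\phi$. The Appendix proof in the paper reaches the same conclusion by an explicit computation showing $\bar g_c/g_c - e^{-ic}\bar g_0/g_0\in H^\infty$; your route via $f_0^{(c)}=e^{-ic}f_0$ is a cleaner way to obtain $\Gamma_{f_0^{(c)}}=e^{-ic}\Gamma_{f_0}$. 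One point in your favor: neither of the paper's arguments spells out the ``no singular part'' assertion, whereas your factorization $1/(1-\phi_c)=\bigl[\psi/(1-\phi_c)\bigr]\cdot\bigl[1/\psi\bigr]\in H^2\cdot H^2\subset H^1$ (using \eqref{mnogoraz} and \eqref{aakpsi}) gives $F_c\in H^1$ and hence absolute continuity of the Herglotz measure directly.
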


It is absolutely trivial from the point of view of Theorem \ref{thAD}: 
If matrix function $W$ is in matrix $A_2$ then the following matrix is obviously also in matrix $A_2$:
$$
W_c:=\begin{bmatrix}e^{ic/2}&0\\ 0&e^{-ic/2}\end{bmatrix}\begin{bmatrix}1&\phi\\ \bar\phi&1\end{bmatrix}\begin{bmatrix}e^{-ic/2}&0\\ 0&e^{ic/2}\end{bmatrix}
$$ 
is just $W_c = \begin{bmatrix}1&\phi_c\\ \bar\phi_c&1\end{bmatrix}$, where $\phi_c = e^{ic}\phi$.


The matrix $A_2$ condition for the specific weight \eqref{weight} can
 be given as the following scalar condition
(this is a slight modification of the condition given in \cite{AD2}).
\begin{lemma} A weight of the form \eqref{weight} satisfies $A_2$ if
 and only if
\begin{equation}\label{sina2}
\sup_{I}\frac{1}{|I|}\int_{I} \frac{|\phi-\Mth|^2+
(1-|\Mth|^2)}{1-|\phi|^2} \,dm<\infty,
\end{equation}
where for an arc $I\subset \bbT$ we put
\begin{equation}
\Mth:=\frac{1}{|I|}\int_{I} \phi\,dm.
\end{equation}
\end{lemma}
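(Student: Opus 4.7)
The plan is to reduce the matrix $A_2$ condition on the particular weight $W$ of \eqref{weight} to an equivalent scalar trace condition, and then to verify by an explicit algebraic identification that this trace differs from the integrand in \eqref{sina2} only by the additive constant $1$.

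First I would write the matrix $A_2$ condition of \cite{TV} in its usual form, $\sup_I \|\langle W\rangle_I^{1/2}\langle W^{-1}\rangle_I \langle W\rangle_I^{1/2}\|_{op}<\infty$. Since we are in dimension two and the matrix $M_I:=\langle W\rangle_I^{1/2}\langle W^{-1}\rangle_I\langle W\rangle_I^{1/2}$ is positive, its operator norm is comparable with its trace (one has $\|M_I\|\le \tr M_I\le 2\|M_I\|$ for any positive $2\times 2$ matrix, with the lower bound $\tr M_I\ge 2$ automatic from Kadison's operator Jensen inequality $\langle W^{-1}\rangle_I\ge \langle W\rangle_I^{-1}$). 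Hence the matrix $A_2$ condition is equivalent to
$$\sup_I \tr\bigl(\langle W\rangle_I\langle W^{-1}\rangle_I\bigr)<\infty.$$

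Next, since $\det W=1-|\phi|^2$, we have
$$W^{-1}=\frac{1}{1-|\phi|^2}\begin{bmatrix}1&-\phi\\-\bar\phi&1\end{bmatrix},$$
and writing out $\langle W\rangle_I$ and $\langle W^{-1}\rangle_I$ entry by entry and multiplying, a direct computation yields
$$\tr\bigl(\langle W\rangle_I\langle W^{-1}\rangle_I\bigr)=2\left\langle\frac{1}{1-|\phi|^2}\right\rangle_I-2\,\Re\!\left(\Mth\left\langle\frac{\bar\phi}{1-|\phi|^2}\right\rangle_I\right).$$

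Finally, I would expand the numerator of the integrand in \eqref{sina2} as
$$|\phi-\Mth|^2+(1-|\Mth|^2)=1+|\phi|^2-2\Re(\bar\phi\,\Mth),$$
divide by $1-|\phi|^2$ using the identity $\frac{1+|\phi|^2}{1-|\phi|^2}=\frac{2}{1-|\phi|^2}-1$, and average over $I$; the result is exactly $\tr(\langle W\rangle_I\langle W^{-1}\rangle_I)-1$. Taking suprema then produces the claimed equivalence. The argument is essentially bookkeeping, and I do not expect a serious obstacle: the only mildly delicate point is the trace/operator-norm comparison for positive $2\times 2$ matrices, which is elementary. The rest is careful tracking of the complex conjugates in the matrix multiplication.
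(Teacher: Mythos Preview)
Your proof is correct and follows essentially the same approach as the paper: both reduce the matrix $A_2$ bound on $W$ to a scalar bound on the trace of $\langle W\rangle_I^{1/2}\langle W^{-1}\rangle_I\langle W\rangle_I^{1/2}$ and identify that trace with the expression in \eqref{sina2} up to an additive constant. The only cosmetic difference is that the paper conjugates $\langle W^{-1}\rangle_I$ by the Cholesky factor of $\langle W\rangle_I$ and reads off the two diagonal entries separately, whereas you compute $\tr(\langle W\rangle_I\langle W^{-1}\rangle_I)$ directly by matrix multiplication; by cyclicity of the trace these are the same quantity, and your route is marginally shorter.
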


\begin{proof} The matrix
weight $\begin{bmatrix}1&\phi\\
\bar\phi&1\end{bmatrix}$ is in $A_2$ implies
\begin{equation}\label{aa3}
\begin{split}
   \frac 1 {|I|} \int_I \frac{\begin{bmatrix}1&-\phi\\
-\bar\phi&1\end{bmatrix}}{1-|\phi|^2}dm\le &C\begin{bmatrix}1&\Mth\\
\bar{\Mth}&1\end{bmatrix}^{-1}\\
=&C\left\{\begin{bmatrix}1&0\\
\bar{\Mth}&\sqrt{1-|\Mth|^2}\end{bmatrix}\begin{bmatrix}1&\Mth\\
0&\sqrt{1-|\Mth|^2}\end{bmatrix}\right\}^{-1},
\end{split}
\end{equation}
or
\begin{equation}\label{aa4}
   \frac 1 {|I|}\int_I \frac{\begin{bmatrix}1-|\phi|^2+|\phi-\Mth|^2&(\Mth-\phi)\sqrt{1-|\Mth|^2}\\
\overline{(\Mth-\phi)}\sqrt{1-|\Mth|^2}&
1-|\Mth|^2\end{bmatrix}}{1-|\phi|^2}dm\le C,
\end{equation}
which is equivalent to \eqref{sina2}.
\end{proof}
\begin{remark}  The latter form \eqref{sina2} of condition  \eqref{sa2}  makes indeed evident that this condition
  is invariant with respect
to the rotation $\phi\mapsto \phi e^{i c}$.
In fact, the condition is stable
 with respect to an arbitrary
fraction--linear transform
$$
\phi(\zeta)\mapsto e^{ic}\frac{\phi(\zeta)-a\zeta}{1-\bar a \phi(\zeta)/\zeta}, \quad |a|<1, \ c\in\bbR.
$$
\end{remark}

Since a proof of Theorem \ref{thAD} is fairly simple and short we give it here.

\begin{proof} First we note that  \eqref{sina2} in case $I=\bbT$ has the form
\begin{equation}\label{sina3}
\int_{\bbT} \frac{|\phi|^2+1}{1-|\phi|^2} \,dm<\infty,
\end{equation}
therefore by Proposition \ref{trivial} such $\phi$ is regular.

Now, recall that a weight is in $A_2$ if and only if (there
 exists $Q>0$)
\begin{equation}\label{condform}
    \langle W^{-1}P_+X,P_+X\rangle\le Q\langle W^{-1} X, X\rangle, \
 \forall X\in L^2_{W^{-1}}.
\end{equation}
Note that
$$
H^2\to W\begin{bmatrix} 0\\H^2\end{bmatrix}\quad H_-^2\to
 W\begin{bmatrix} H_-^2\\0\end{bmatrix}
$$
are unitary embedding in $L^2_{W^{-1}}$.

The orthogonal complement to
 their sum (an alternative definition of $K_\phi$) consists of the vectors of the form
$$
K_\phi=\{X=\begin{bmatrix} x_+\\x_-\end{bmatrix}, \ x_\pm\in H^2_\pm\}.
$$
The vectors
$$
X=\begin{bmatrix} \psi x_+\\ \bar\psi x_-\end{bmatrix}
$$
evidently belongs to $K_\phi$. They are {\em dense} in $K_\phi$ if and only if
$\hat H_\phi=\check H_\phi$, i.e. $\phi$ is regular, Theorem \ref{1.2}.

Now we calculate the quadratic forms \eqref{condform} for the test--vectors
 $$
X=\begin{bmatrix} \psi x_+\\ \bar\psi x_-\end{bmatrix}+W\begin{bmatrix}
 y_-\\y_+\end{bmatrix},
\quad x_\pm\in H^2_\pm,\  y_\pm\in H^2_\pm,
$$
We get
\begin{equation*}\label{form1}
\begin{split}
    \langle W^{-1} X, X\rangle=&\langle \frac{\begin{bmatrix}1&-\phi
    \\  -\bar\phi&1\end{bmatrix}}{1-|\phi|^2} \begin{bmatrix} \psi
 x_+\\ \bar\psi x_-\end{bmatrix},
    \begin{bmatrix} \psi x_+\\ \bar\psi x_-\end{bmatrix}\rangle+
    \langle \begin{bmatrix}1&\phi\\ \bar\phi&1\end{bmatrix}
 \begin{bmatrix} y_-\\y_+\end{bmatrix}, \begin{bmatrix}
 y_-\\y_+\end{bmatrix}\rangle\\
    =&\langle {\begin{bmatrix}1&\bar f_0
    \\  f_0&1\end{bmatrix}} \begin{bmatrix} x_+\\  x_-\end{bmatrix},
    \begin{bmatrix}  x_+\\ x_-\end{bmatrix}\rangle+\langle
 y_-,y_-\rangle
    +\langle y_+,y_+\rangle\\
    =&\langle {\begin{bmatrix}1& \Gamma^*
    \\  \Gamma&1\end{bmatrix}} \begin{bmatrix} x_+\\  x_-\end{bmatrix},
    \begin{bmatrix} x_+\\  x_-\end{bmatrix}\rangle+\langle
 y_-,y_-\rangle
    +\langle y_+,y_+\rangle.
    \end{split}
\end{equation*}
Since
$$
P_+X=\begin{bmatrix} \psi x_+\\ 0\end{bmatrix}+\begin{bmatrix} \phi\\
 1\end{bmatrix}y_+,
$$
we get
\begin{equation*}\label{form2}
\begin{split}
\langle W^{-1}P_+X,P_+X\rangle &=\langle \begin{bmatrix} 1\\-\bar
 \phi\end{bmatrix}\frac{x_+}{\bar\psi}+
\begin{bmatrix} 0\\ 1\end{bmatrix}y_+,\begin{bmatrix} \psi x_+\\
 0\end{bmatrix}+\begin{bmatrix} \phi\\ 1\end{bmatrix}y_+\rangle
\\
&=\langle x_+,x_+\rangle+\langle y_+,y_+\rangle.
  \end{split}
\end{equation*}
Therefore for such vectors \eqref{condform} is equivalent to
$$
\|x_+\|^2+\|y_+\|^2\le Q\{\langle(I-\Gamma^* \Gamma)x_+,x_+\rangle+
\|\Gamma x_+ +x_-\|^2+\|y_-\|^2+\|y_+\|^2\},
$$
which is evidently equivalent to $(I-\Gamma^* \Gamma)>\epsilon I$ with
 $\epsilon>0$.

 Thus,  $A_2$ implies the regularity and the bound $\|\Gamma\|<1$. Conversely,
 if $\phi$ is regular and  $\|\Gamma\|<1$ then \eqref{condform} holds on a dense set, and hence
 $W\in A_2$.
\end{proof}

It is interesting  that a similar condition on an arbitrary symbol of the Hankel
 operator guaranties invertibility of
$(I-\Gamma^*\Gamma)$.

\begin{proposition} Let $f$ satisfies the $A_2$-kind condition
\begin{equation}\label{sina21}
\sup_{I}\frac{1}{|I|}\int_{I} \frac{|f-\Ms|^2+
(1-|\Ms|^2)}{1-|f|^2} \,dm<\infty,
\end{equation}
where
\begin{equation}
\Ms:=\frac{1}{|I|}\int_{I} f\,dm.
\end{equation}
Then $\|\Gamma\|<1$ for $\Gamma x:= P_-fx$.
\end{proposition}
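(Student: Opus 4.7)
The plan is to mimic the last part of the proof of Theorem \ref{thAD}, with $\phi$ replaced by the (generally non-analytic) symbol $f$, using a simpler test vector that no longer needs the outer analytic square root $\psi$.

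\textbf{Step 1.} I would first observe that \eqref{sina21} is equivalent to the matrix $A_2$ condition for
\[
W_f:=\begin{bmatrix}1&f\\ \bar f&1\end{bmatrix}.
\]
The computation in the Lemma preceding Theorem \ref{thAD} is purely algebraic in the matrix entries of $W$ and uses no analyticity of its entries, so substituting $f,\,\Ms$ for $\phi,\,\Mth$ carries it through verbatim. By the Treil--Volberg theorem \cite{TV}, matrix $A_2$ for $W_f$ is equivalent to boundedness of the Riesz projection on $L^2_{W_f}$: there exists $C>0$ with
\[
\langle W_f P_+X,P_+X\rangle\le C\,\langle W_f X,X\rangle,\qquad X\in L^2_{W_f}.
\]

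\textbf{Step 2.} Next I would apply this inequality to the single one-parameter family of test vectors
\[
X=\begin{bmatrix}-fh\\ h\end{bmatrix},\qquad h\in H^2.
\]
A short calculation using $W_f X=\begin{bmatrix}0\\(1-|f|^2)h\end{bmatrix}$ gives $\langle W_f X,X\rangle=\int(1-|f|^2)|h|^2\,dm$. Setting $T_f h:=P_+(fh)$ so that $P_+X=\begin{bmatrix}-T_f h\\ h\end{bmatrix}$, the orthogonality $H^2\perp H^2_-$ together with the Pythagorean identity $\|fh\|^2=\|T_f h\|^2+\|\Gamma h\|^2$ yields
\[
\langle W_f P_+X,P_+X\rangle=\|h\|^2-\|T_f h\|^2=\int(1-|f|^2)|h|^2\,dm+\|\Gamma h\|^2.
\]

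\textbf{Step 3.} Substituting these two identities into the $A_2$ inequality gives $\|\Gamma h\|^2\le(C-1)\int(1-|f|^2)|h|^2\,dm$; then using the elementary identity $\int(1-|f|^2)|h|^2\,dm=\|h\|^2-\|T_f h\|^2-\|\Gamma h\|^2$ and rearranging leads to $C\|\Gamma h\|^2+(C-1)\|T_f h\|^2\le(C-1)\|h\|^2$. Discarding the nonnegative $\|T_f h\|^2$-term produces $\|\Gamma h\|^2\le\frac{C-1}{C}\|h\|^2$, hence $\|\Gamma\|^2\le\frac{C-1}{C}<1$.

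I do not foresee any genuine obstacle: the only nontrivial ingredient is the Treil--Volberg theorem, which is already the tool invoked in Theorem \ref{thAD}. The one novel point, compared with the proof of Theorem \ref{thAD}, is the choice of the simpler test vector $\begin{bmatrix}-fh\\ h\end{bmatrix}$; it is available here because we only need the one-sided bound $\|\Gamma\|<1$, not the full density statement about $\hat H_\phi$ that in Theorem \ref{thAD} forced the use of $\psi$-twisted test vectors.
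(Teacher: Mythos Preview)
Your argument is correct and follows the same overall strategy as the paper: identify \eqref{sina21} with the matrix $A_2$ condition for $W_f$, invoke Treil--Volberg to get the weighted boundedness of $P_+$, and then plug in a single test vector depending on $h\in H^2$.

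The only difference is the choice of test vector. The paper works with the transposed weight $\begin{bmatrix}1&\bar f\\ f&1\end{bmatrix}$ and takes
\[
X=\begin{bmatrix}x\\-\Gamma x\end{bmatrix},\qquad x\in H^2,
\]
so that $P_+X=\begin{bmatrix}x\\0\end{bmatrix}$; then the two quadratic forms are computed in one line as $\|x\|^2$ and $\langle(I-\Gamma^*\Gamma)x,x\rangle$, giving $\|x\|^2\le Q\langle(I-\Gamma^*\Gamma)x,x\rangle$ directly. Your vector $\begin{bmatrix}-fh\\ h\end{bmatrix}$ leads to the same conclusion but through the extra detour via $T_f h$ and the Pythagorean identity in Step~3; the paper's choice simply bypasses that algebra. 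Neither approach uses anything beyond the Treil--Volberg criterion, so the content is the same.
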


\begin{proof}
We have
\begin{equation*}
    \langle\begin{bmatrix} 1& \bar f\\ f&
 1\end{bmatrix}P_+X,P_+X\rangle\le Q
     \langle\begin{bmatrix} 1& \bar f\\ f& 1\end{bmatrix}X,X\rangle.
\end{equation*}
Put here
$$
X=\begin{bmatrix} x\\-\Gamma x\end{bmatrix}, \ x\in H^2.
$$
We get
$$
\langle x, x\rangle\le Q\langle (I-\Gamma^*\Gamma)x,x\rangle.
$$
\end{proof}

\vspace{.2in}

\noindent{\bf Appendix.}

\vspace{.1in}

We wish to prove Theorem \ref{curious} without relating to strongly regular functions in the sense of Arov-Dym, namely, to give a direct proof  of this curious result.

Let us recall that we start with $w\in A_2$. We extend it into the disc by harmonicity, get a positive harmonic function and represent it--as usual--in the form
\begin{equation}
\label{A2}
\frac{1-|\phi|^2}{|1-\phi|^2} = w\,,
\end{equation}
where $\phi$ is holomorphic in the disc and of $H^{\infty}$-norm at most $1$. Such functions $\phi$ and positive harmonic functions are in one to one correspondance  by \eqref{A2}.

Let $\phi_c:= e^{ic} \phi\,, \psi_c:= e^{ic} \psi\,,c\in\bbR$. Consider a new positive harmonic function given by
\begin{equation}
\label{A2c}
w_c:=\frac{1-|\phi_c|^2}{|1-\phi_c|^2}\,,
\end{equation}

\begin{theorem}
\label{curios}
If  $w \in A_2$, and $\phi\in H^{\infty}$, $\|\phi\|_{\infty}\le 1$ is given by \eqref{A2},  then a positive harmonic function $w_c$ from \eqref{A2c} does not have singular part in its Herglotz representation. Its absolutely continuous part on the circle (also called $w_c$) is uniformly in $A_2$.
\end{theorem}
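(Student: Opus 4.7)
My plan is to deduce Theorem~\ref{curios} from Theorem~\ref{strongly} by observing that strong regularity of $\phi$ is manifestly preserved under the rotation $\phi\mapsto e^{ic}\phi$; this route avoids the matrix--$A_2$ / Arov--Dym framework entirely.

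Setting $\phi_c := e^{ic}\phi$, the outer factor satisfies $\psi_c=\psi$ (it depends only on $|\phi|$), so $f_{0,c}=-\overline{\phi_c}\psi_c/\overline{\psi_c} = e^{-ic}f_0$ and the associated Hankel operator transforms as $\Gamma_c=e^{-ic}\Gamma$, whence $\|\Gamma_c\|=\|\Gamma\|$. Regularity propagates: the parametrization~\eqref{dcrp-nehclass} applied to $\phi_c$ is related to that for $\phi$ by the substitution $\cE\mapsto e^{ic}\cE$ combined with multiplication of $f$ by $e^{-ic}$, which is precisely the identification $\cN(F_{-,c})=e^{-ic}\cN(F_-)$. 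Hence $\phi_c$ is strongly regular whenever $\phi$ is, with the same Hankel norm. Theorem~\ref{strongly}, applied first to $\phi$ and then in reverse to $\phi_c$, therefore gives $w_c\in A_2$; the quantitative form of the equivalence makes the $A_2$ constant depend only on $\|\Gamma\|$, and hence be uniform in $c$.

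It remains to show that the Herglotz measure of $u_c:=(1+\phi_c)/(1-\phi_c)$ has no singular part, which amounts to the identity $\int w_c\,dm = \Re u_c(0) = 1$. Being Herglotz, $u_c$ lies in $\bigcap_{p<1}H^p$, and in particular in the Smirnov class $N^+$. The Cauchy--Schwarz estimate
\[
\int |u_c|\,dm \;\le\; 2\int\frac{dm}{|1-e^{ic}\phi|} \;\le\; 2\Big(\int w_c\,dm\Big)^{1/2}\Big(\int\frac{dm}{1-|\phi|^2}\Big)^{1/2} <\infty,
\]
using $w_c\in L^1$ (from the $A_2$ property just established) and $1/(1-|\phi|^2)\in L^1$ (from $w\in A_2$, as in the proof of Theorem~\ref{strongly}), shows $u_c\in L^1\cap N^+=H^1$. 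Therefore $\int u_c\,dm = u_c(0) = 1$, and taking real parts gives $\int w_c\,dm = 1$, as required.

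The only step requiring careful verification is the invariance bookkeeping in the second paragraph --- namely, that regularity really does pass to the rotate $\phi_c$ with the same Hankel norm; once this is granted, the result reduces to a direct transcription of Theorem~\ref{strongly}.
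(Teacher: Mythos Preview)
Your argument is correct. The core observation --- that the Hankel operator transforms as $\Gamma_c=e^{-ic}\Gamma$ under the rotation $\phi\mapsto e^{ic}\phi$ --- is precisely the engine of the paper's Appendix proof as well; there it is written as $\Gamma_{\bar g_c/g_c}=e^{-ic}\Gamma_{\bar g_0/g_0}$ with $g_c=(1-\phi_c)/\psi$, which is the same identity once one notes that $\bar g_0/g_0-f_0=\psi^2/(1-\phi)\in H^\infty$. The difference is packaging: you invoke Theorem~\ref{strongly} as a black box and transport strong regularity across the rotation (your verification that the Nehari class for $\phi_c$ is $e^{-ic}\cN(F_-)$ is correct), whereas the Appendix deliberately avoids naming strong regularity and works directly with the Helson--Szeg\H{o} machinery and invertibility of the Toeplitz operator $T_{\bar g_c/g_c}$. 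Your route is shorter given Theorem~\ref{strongly}; the paper's is more self-contained. Your explicit treatment of the singular part via $u_c\in N^+\cap L^1=H^1$ is a clean addition; in the paper the same conclusion is implicit (from $1/g_c\in H^2$ and $1/\psi\in H^2$ one gets $1/(1-\phi_c)\in H^1$, hence $u_c\in H^1$), but it is not spelled out there.
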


\begin{proof}
We already saw that \eqref{A2} implies
$$
\frac{1}{1-|\phi|^2} \in L^1({\bbT})\,.
$$
Then if we put $g_c :=\frac{1-\phi_c}{\psi}$ we get that outer function $g_c$ always is such that $g_c\in H^2$.

Assume for a moment that we can prove the folowing:
\begin{equation}
\label{ic}
 \Gamma_{\frac{\bar{g_c}}{g_c}} = e^{-ic}\Gamma_{\frac{\bar{g_0}}{g_0}}\,.
 \end{equation} 
 
The last operator has norm less than $1$. In fact, $|g_0|^2$ is given to be in $A_2$ by \eqref{A2}. 
And Then by Helson-Szeg\"o's theorem we have this estimate of the norm of Hankel operator, again see\cite{Nik}.
If the Hankel operator with unimodular symbol has norm strictly lees than $1$, then the Toeplitz operator with the same symbol is  obviously  bounded from below.

We conclude that for each real $c$ Toeplitz operator $T_{\frac{\bar{g_c}}{g_c}} $ is left invertible, and, combining this with the fact that  $g_c\in H^2$, we conclude that its adjoint has a trivial kernel.
Then  $T_{\frac{\bar{g_c}}{g_c}} $ is  invertible. But unimodular symbols of invertible Toeplitz operators are $\frac{\bar{h}}{h}$ with $|h|^2\in A_2$, see \cite{Nik}.
Now $1/g_c=\frac{\psi}{1-\phi_c}$ and $1/|g_c|^2= \frac{1-|\phi_c|^2}{|1-\phi_c|^2}= \Re \frac{1+\phi_c}{1-\phi_c}$ in the unit disc. The latter function has positive real part, and so $1/|g_c|^2\in L^1(\mathbb{T})$. As $1/g_c$ is outer, we conclude that $1/g_c\in H^2$.

 Using that $1/g_c\in H^2$
and writing $\frac{\bar{g_c}}{g_c}=\frac{\bar{h}}{h}$ we conclude that $g_c= const \cdot h$. Therefore, $\frac{|1-\phi_c|^2}{1-|\phi |^2}=|g_c|^2\in A_2$.

To  finish the proof of our Theorem \ref{curios} we are left to prove \eqref{ic}. For that we want to prove 
\begin{equation}
\label{g0}
\frac{\bar{g_c}}{g_c} \in e^{-ic}\frac{\bar{g_0}}{g_0} + H^{\infty}\,.
\end{equation}
Let us write a chain of equalities:
$$
\frac{\bar{g_0}}{g_0} = \frac{\overline{1-\phi}}{1-\phi}\cdot \frac{\psi}{\bar{\psi}} = \frac{(|\psi|^2+|\phi|^2)\psi - \bar{\phi}\psi}{(1-\phi)\bar{\psi}}= -\frac{\bar{\phi}\psi}{\bar{\psi}} + \frac{\psi^2}{1-\phi}\,.
$$
Similarly,
$$
\frac{\bar{g_c}}{g_c} =e^{-ic} \frac{\overline{1-\phi_c}}{1-\phi_c}\cdot \frac{\psi_{\frac{c}2}}{\bar{\psi_{\frac{c}2}}} = -e^{-ic}\bigg[\frac{\bar{\phi_c}\psi_{\frac{c}2}}{\bar{\psi_{\frac{c}2}}} + \frac{\psi_{\frac{c}2}^2}{1-\phi_c}\bigg]=
$$
$$
-e^{-ic}\frac{\bar{\phi}\psi}{\bar{\psi}}  +\frac{\psi^2}{1-\phi_c}\,.
$$

Comparing these two equalities we obtain:
$$
\frac{\bar{g_c}}{g_c} =e^{-ic} \frac{\bar{g_0}}{g_0} + \bigg(\frac{\psi^2}{1-\phi_c}- e^{-ic}\frac{\psi^2}{1-\phi}\bigg)\,.
$$
Both functions in the brackets are from $H^{\infty}$. In fact, they are obviously from Smirnov class. 
Also their boundary values are at most $2$ by absolute value: from the above chains of equalities one can see that they are simply sums of two unimodular functions each.

\end{proof}

\bibliographystyle{amsplain}

\begin{thebibliography}{24}

\bibitem{AAK1}
V. Adamjan, D. Arov, M. Kre\u\i n,
 \textit{Infinite Hankel matrices and generalized problems of
 CarathŽodory-Fej\'er and F. Riesz}. (Russian)
 Funkcional. Anal. i Prilo\v zen.  2  1968 no. 1, 1--19.


\bibitem{AAK2}
V. Adamjan, D. Arov, M. Kre\u\i n,
 \textit{Infinite Hankel matrices and generalized CarathŽodory-Fej\'er
 and I. Schur problems}.
  (Russian)  Funkcional. Anal. i Prilo\v zen.  2  (1968), no. 4, 1--17.

\bibitem{AAK3}
V. Adamjan, D. Arov, M. Kre\u\i n,
 \textit{Analytic properties of the Schmidt pairs of a Hankel operator
 and the generalized Schur-Takagi problem.}
  (Russian)  Mat. Sb. (N.S.)  86(128)  (1971), 34--75.

\bibitem{Ar}
D. Arov,
\textit{Regular and singular J-inner matrix functions and corresponding
 extrapolation
problems}. (Russian) Funktsional. Anal. i Prilozhen. 22 (1988), no. 1,
 57Ð59;
translation in Funct. Anal. Appl. 22 (1988), no. 1, 46Ð48.

\bibitem{AD1} D. Arov, H. Dym,
{\it On matricial Nehari problems, $J$--inner matrix functions and
the Muckenhoupt condition}, J. Funct. Anal. 181, 227--299 (2001)

\bibitem{AD2} D. Arov, H. Dym,
{\it Criteria for the strong regularity of $J$--inner functions and
$\gamma$--generating matrices}, J. Math. Anal. Appl. 280 (2003),
387--399.

\bibitem{BS1}
A. Borichev, M. Sodin,
\textit{The Hamburger moment problem and weighted polynomial
 approximation on discrete subsets of the real line}.
J. Anal. Math.  76  (1998), 219--264.

\bibitem{BS2}
A. Borichev, M. Sodin,
\textit{Weighted polynomial approximation and the Hamburger moment
 problem}.
Complex analysis and differential equations (Uppsala, 1997),  110--122,
Acta Univ. Upsaliensis Skr. Uppsala Univ. C Organ. Hist., 64, Uppsala
 Univ., Uppsala, 1999.


\bibitem{CK}
M. Christ, A. Kiselev, \textit{Scattering and wave operators for
one--dimensional Schr\"odinger operators with slowly decaying
nonsmooth potentials}, Geom. funct. anal. {\bf 12}(2002), 1174--1234.


\bibitem{Garnett}
 Garnett, John B. \textit{Bounded analytic functions}. Revised first
edition. Graduate Texts in Mathematics, 236. Springer, New York,
2007. xiv+459 pp.

\bibitem{Kats}
V. Katsnelson,
\textit{Left and right Blaschke-Potapov products and Arov-singular
 matrixvalued
functions}, Integral Equations Operator Theory 13 (1990), no. 6,
 836Ð848.

\bibitem{Kh1}
A. Kheifets,
\textit{On regularization of $\gamma$-generating pairs}, J. Funct.
 Anal. 130 (1995), no.
2, 310Ð333.

\bibitem{Kh2}
A. Kheifets,
\textit{NehariÕs interpolation problem and exposed points of the unit
 ball in
the Hardy space $H^1$}, Proceedings of the Ashkelon Workshop on Complex
 Function
Theory (1996), 145Ð151, Israel Math. Conf. Proc., 11, Bar-Ilan Univ.,
 Ramat Gan,
1997.

\bibitem{KhPYu}
A. Kheifets, F. Peherstorfer, and P. Yuditskii,
\textit{On scattering for CMV matrices}, arXiv: 0706.2970v1, 2007.

\bibitem{MP}
   N. Makarov, A. Poltoratski,
   \textit{Beurling-Malliavin theory for Toeplitz kernels},
Preprint arXiv:math/0702497, 2007.

\bibitem{Nik}
N. Nikolskii, \textit{Treatise on the Shift Operator},
Springer--Verlag, Berlin, 1986.

\bibitem{R}
Ch. Remling, \textit{ Schr\"odinger operators and de Branges spaces}.
  J. Funct. Anal.  196  (2002),  no. 2, 323--394.

\bibitem{Sv1} B. Simon, \textit{Orthogonal Polynomials on the Unit
Circle, Part 1: Classical Theory}, AMS Colloquium Series, AMS,
Providence, RI, 2005.

\bibitem{Sv2} B. Simon, \textit{Orthogonal Polynomials on the Unit
Circle, Part 2: Spectral Theory}, AMS Colloquium Series, AMS,
Providence, RI, 2005.


\bibitem{NF}
 B.Sz.-Nagy, C. Foias,
 \textit{Harmonic analysis on operators on Hilbert space}, North-Holland, Amsterdam (1970).

\bibitem{TV} S. Treil, A. Volberg, \textit{The angle between past and future,} J. Funct. Analysis.
Volume 143, no. 2, (1997), Pages 269-308.


\bibitem{TTh}
T. Tao, Ch. Thiele, \textit{Nonlinear Fourier Analysis}, IAS/Park
City Mathematics Series.

\bibitem{VYu}
   A. Volberg and P. Yuditskii,
\textit{ On the inverse scattering problem for Jacobi  matrices with
the spectrum  on an interval, a finite system of intervals or a
Cantor set of positive length},
   Comm. Math. Phys. {\bf 226} (2002),
567--605.

\end{thebibliography}

\end{document}